\newtheorem{theorem}{Theorem}[section]
\newtheorem{lemma}[theorem]{Lemma}
\newtheorem{corollary}[theorem]{Corollary}
\newtheorem{proposition}[theorem]{Proposition}
\theoremstyle{definition}
\newtheorem{definition}[theorem]{Definition}
\theoremstyle{remark}
\newtheorem{remark}[theorem]{Remark}
\renewcommand*{\Phi}{\varPhi}%
\newcommand*{\floor}[1]{\lfloor#1\rfloor}%
\newcommand*{\chicf}{\chi_{\text{\textrm{\textup{cf}}}}}%
\newcommand*{\chium}{\chi_{\text{\textrm{\textup{um}}}}}%
\newcommand*{\isminorof}{\preccurlyeq}%
\newcommand*{\containssubgraph}{\supseteq}%
\newcommand*{\subgraph}{\subseteq}%
\newcommand*{\issubgraphof}{\subgraph}%
\newcommand*{\propersubgraph}{\subset}%
\newcommand*{\card}[1]{\lvert#1\rvert}%
\newcommand*{\Gup}{\hat{G}}%
\newcommand*{\Gdown}{\check{G}}%
\newcommand*{\vup}{\overline{v}}%
\newcommand*{\vdown}{\smash[b]{\underline{v}}}%
\newcommand*{\Pup}{P^{\uparrow}}%
\newcommand*{\Pdown}{P^{\downarrow}}%
\newcommand*{\vcs}{v^{\text{\textrm{\textup{cs}}}}}%
\newcommand*{\vp}{v^{\text{\textrm{\textup{p}}}}}%
\newcommand*{\issubsetof}{\subseteq}%
\newcommand*{\isstrictsubsetof}{\subset}%
\newcommand*{\logfhol}{18\log_{2}}%
\begin{document}

%\begin{titlepage}
\title{Graph unique-maximum and conflict-free colorings}

\author{%
Panagiotis Cheilaris 
\\
Center for Advanced Studies in Mathematics\\
Ben-Gurion University\\
panagiot@math.bgu.ac.il\\
\and G{\'e}za T{\'o}th\thanks{Supported by OTKA T 038397 and 046246.} \\
R{\'e}nyi Institute\\
Hungarian Academy of Sciences\\
geza@renyi.hu 
}

\date{}%
\maketitle

\begin{abstract}
We investigate the relationship 
between two kinds of vertex colorings of graphs:
unique-maximum colorings and 
conflict-free colorings.
In a unique-maximum coloring, the colors are ordered, and 
in every path of the graph the
maximum color appears only once.
%occurring in a vertex of the path is unique in the path.
In a conflict-free coloring, 
in every path of the graph there is 
a color that appears only once.
%occurring in a vertex of the path such that it occurs
%nowhere else in the path.
%These colorings have applications in parallelization of some
%processes
%and frequency assignment in cellular networks.
We also study computational complexity aspects of
conflict-free colorings and prove a completeness result. 
%We introduce two games in graphs that are closely related to the
%conflict-free and unique-maximum chromatic numbers.
Finally, we improve lower bounds for those chromatic numbers 
of the grid graph.
\end{abstract}

\bigskip

{\bf Keywords}:
unique-maximum coloring,
ordered coloring, vertex ranking,
conflict-free coloring
%\end{titlepage}

\section{Introduction}

In this paper we study two types of vertex 
colorings of graphs, both related to 
paths.
The first one is the following:

\begin{definition} \label{def:ordcol}
A \emph{unique-maximum coloring with respect to paths}
of $G=(V,E)$ with $k$ colors
is a function $C \colon V \to \{1,\dots,k\}$
such that for  each path $p$ in $G$
the maximum color occurs exactly once on the vertices of $p$.
The minimum $k$ for which a graph $G$ has a unique-maximum coloring
with $k$ colors
is called the \emph{unique-maximum chromatic number} of $G$ and is
denoted by $\chium(G)$.
\end{definition}

Unique maximum colorings are known alternatively in the literature as
\emph{ordered colorings} or \emph{vertex rankings}.
The problem of computing unique-maximum colorings is a well-known and
widely studied problem (see e.g.\ \cite{orderedcoloring}) with many
applications including \emph{VLSI design}
\cite{DBLP:conf/focs/Leiserson80} and \emph{parallel Cholesky
decomposition} of matrices \cite{liu:134}. 
The Cholesky decomposition method is used in solving sparse linear
systems $Ax=b$, whenever $A$ is a symmetric $n\times n$ 
positive-definite matrix, and is faster than the more general LU
decomposition.
In \cite{liu:134}, given a symmetric $n\times n$ 
positive-definite matrix $A$,
a graph $G(A)$ on $n$ vertices is defined which encodes the
data dependencies between different columns in the linear system.
The unique-maximum chromatic number of $G(A)$ is a rough estimate
of the work required in parallel Cholesky decomposition of matrix $A$.
The 
problem is also interesting for the Operations Research community,
because it has applications in \emph{planning efficient assembly of
products} in manufacturing systems \cite{optnoderanktree}. In
general, it seems that the vertex ranking problem can model situations
where interrelated tasks have to be accomplished fast in parallel
(assembly from parts, parallel query optimization in databases,
etc.)
Another application of unique-maximum colorings is in 
estimating the worst-case complexity of
\emph{finding local optima in neighborhood structures}. 
A neighborhood structure is a connected graph in which
every vertex has a real value. 
Suppose that we want to find a vertex $v$ which is a local optimum. For
example, if $v$ is a local minimum, then its value is not greater
than the values of its adjacent vertices. The goal is to query as few
vertices of the neighborhood structure as possible.
In some classes of bounded-degree neighborhood structures (like
grids), 
the worst-case complexity of finding a local optimum is 
closely related to the unique-maximum chromatic number of the
corresponding graph
(see \cite{LTT1989dam}).

The other type of vertex coloring can be seen as a relaxation of the
unique-maximum coloring.

\begin{definition} \label{def:cfcol}
A \emph{conflict-free coloring with respect to paths}
of $G=(V,E)$ with $k$ colors
is a function $C \colon V \to \{1,\dots,k\}$
such that for each path $p$ in $G$
there is a color that occurs exactly once on the vertices of $p$.
The minimum $k$ for which a graph $G$ has a conflict-free coloring
with $k$ colors
is called the \emph{conflict-free chromatic number} of $G$ and is
denoted by $\chicf(G)$.
\end{definition}

Conflict-free coloring of graphs with respect to paths is a
special case of conflict-free colorings of hypergraphs, studied in 
Even et al.~\cite{ELRS03jo} and Smorodinsky~\cite{Sm03}.
One of the applications of conflict-free colorings is that it 
represents a frequency assignment for cellular
networks. A cellular network consists of two kinds of nodes:
\emph{base stations} and \emph{mobile agents}. Base stations have
fixed positions and provide the backbone of the network; they are
represented by vertices in \(V\). Mobile agents are the clients of the
network and they are served by base stations. This is done as
follows: Every base station has a fixed frequency; this is represented 
by the coloring \(C\), i.e., colors represent frequencies. If an
agent wants to establish a link with a base station it has to tune
itself to this base station's frequency. Since agents are mobile,
they can be in the range of many different base stations. To avoid
interference, the system must assign frequencies to base stations
in the following way: For any range, there must be a base station
in the range with a frequency that is not used by some other
base station in the range. One can solve the problem by assigning
\(n\) different frequencies to the \(n\) base stations. However,
using many frequencies is expensive, and therefore, a scheme that
reuses frequencies, where possible, is preferable. Conflict-free coloring
problems have been the subject of many recent papers due to their
practical and theoretical interest (see e.g.\
\cite{PT03,HS03,Chenetal2006,EM05,%
BCS2008talg}).
Most approaches in the conflict-free coloring literature
use unique-maximum colorings (a notable exception is 
the `triples' algorithm in \cite{BCS2008talg}), 
because unique-maximum colorings are
easier to argue about in proofs, due to their additional
structure. Another advantage of unique-maximum colorings 
is the simplicity of computing the unique color
in any range (it is always the maximum color), given a
unique-maximum coloring, which can be helpful if very simple
mobile devices are used by the agents.

For general graphs, finding the exact unique-maximum chromatic number of a
graph is NP-complete \cite{Pothen1988TR,LTT1989dam} 
and there is a 
polynomial time $O(\log^2{n})$ approximation algorithm
\cite{DBLP:journals/jal/BodlaenderGHK95}, where $n$ is the number
of vertices. 
Since the problem is
hard in general, it makes sense to study specific graphs.

The  $m\times m$ \emph{grid}, $G_m$, is the {\em cartesian
product} of two paths, each of length $m-1$, that is,
the vertex set of $G_m$ is
$\{0,\ldots,m-1\}\times \{0,\ldots,m-1\}$
and the edges are $\{ \{(x_1,y_1),(x_2,y_2)\} \mid |x_1-x_2|+|y_1-y_2|\le 1\}$.
It is known \cite{orderedcoloring} that for
general planar graphs the unique-maximum chromatic number
is $O(\sqrt{n})$. 
Grid graphs are planar and therefore the $O(\sqrt{n})$ bound applies. 
One might expect that,
since the grid has a  relatively simple and
regular structure, it should not be hard
to calculate its
unique-maximum chromatic number. This is why it is rather striking that, even
though it is not hard to show upper and lower bounds that are only a
small constant multiplicative factor apart, the \emph{exact} value
of these chromatic numbers is not known, and has been the subject
of \cite{BCLMZ2009sirocconopages,BCLMZ2009journal}. 

\paragraph{Paper organization.} 
In the rest of this section we
provide the necessary definitions and some earlier results.
In section~\ref{sec:conpcheckcf}, we prove that it is
coNP-complete to decide whether a given vertex
coloring of a graph is conflict-free with respect to paths.
In section~\ref{sec:ratiogeneral}, we show that for every graph
$\chium(G) \leq 2^{\chicf(G)}-1$ and provide a sequence of graphs for
which the ratio $\chium(G)/\chicf(G)$ tends to $2$.
In section~\ref{sec:gridcfum}, we introduce two games on graphs
that help us relate the two chromatic numbers 
for the square grid graph. In section~\ref{sec:gridlbum}, we show
a lower bound on the unique-maximum chromatic number of the square
grid graph, improving previous results.
Conclusions and open
problems are presented in section~\ref{sec:conclusion}.

\subsection{Preliminaries}

%Note that definition~\ref{def:ordcol} is not the
%typical definition found in the literature. Instead the more
%standard definition is the following. 
%
%\begin{definition}
%A \emph{unique-maximum} $k$-coloring (with respect to paths) 
%of a graph $G$ is a function $C\colon
%V(G) \to \{1,\dots,k\}$ such that for every pair of distinct
%vertices $v$, $v'$, and every path $p$ from $v$ to $v'$, if $C(v) =
%C(v')$, there is an internal vertex $v''$ of $p$ such that $C(v) <
%C(v'')$.
%\end{definition}
%
%It is not hard to show that the two definitions are equivalent
%(see, for example, \cite{orderedcoloring}).

\begin{definition}\label{defn:minorinducedcoloring}
A graph $X$ is a \emph{minor} of $Y$, denoted as $X \isminorof Y$,
if $X$ can be obtained from $Y$ by a sequence of the following three
operations: vertex deletion, edge deletion, and edge contraction.
Edge contraction is the process of merging both endpoints of an edge into
a new vertex, which is connected to all vertices adjacent to the two
endpoints.
Given a unique-maximum coloring $C$ of $Y$, 
we get the {\em induced coloring} of $X$ as follows.
Take a sequence of vertex deletions, edge deletions, and edge contractions
so that we obtain $X$ from $Y$. For the vertex and edge deletion operations, just
keep the colors of the remaining vertices.
For the edge contraction operation, say along edge $xy$,
which gives rise to the new vertex $v_{xy}$, set $C'(v_{xy}) =
\max(C(x),C(y))$, and keep the colors of all other vertices.
\end{definition}

%The following has been proved in \cite{rankings_of_graphs_1998}.
\begin{proposition}\label{prop:minorchio} {\rm \cite{rankings_of_graphs_1998}}
If $X \isminorof Y$, and $C$ is a  unique-maximum coloring of $Y$,
then the induced coloring $C'$ is a unique-maximum coloring of $X$.
Consequently,
$\chium(X) \leq \chium(Y)$.
\end{proposition}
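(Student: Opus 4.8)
The plan is to reduce the statement to a single minor operation and then induct. Since $X \isminorof Y$ means that $X$ is obtained from $Y$ by a finite sequence of vertex deletions, edge deletions, and edge contractions, and since the induced coloring is defined step by step along exactly such a sequence, it suffices to prove the following local claim: if $Y'$ is obtained from $Y$ by one of the three operations and $C$ is a unique-maximum coloring of $Y$, then the coloring $C'$ induced by that single operation is a unique-maximum coloring of $Y'$. The proposition then follows by a straightforward induction on the length of the sequence (base case $X = Y$, $C' = C$), and the inequality $\chium(X) \leq \chium(Y)$ follows because the induced coloring never introduces a color outside the range already used by $C$, so it uses at most as many colors as $C$.

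For the two deletion operations the claim is immediate: $Y'$ is a subgraph of $Y$ and $C'$ is the restriction of $C$. Every path $p'$ in $Y'$ is already a path in $Y$, on which by hypothesis the maximum color of $C=C'$ occurs exactly once; hence the same holds in $Y'$.

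The main obstacle is edge contraction, since here paths in $Y'$ do not directly correspond to paths in $Y$. Suppose $Y'$ arises by contracting $xy$ into $v_{xy}$, with $C'(v_{xy}) = \max(C(x),C(y))$, and relabel so that $C(x) = \max(C(x),C(y))$. I would argue by contradiction: assume some path $p'$ in $Y'$ has its maximum $C'$-color $L$ occurring at least twice, and produce a path in $Y$ violating the unique-maximum property of $C$. If $p'$ avoids $v_{xy}$ it is already such a path. Otherwise the key tool is a lifting operation: a neighbor of $v_{xy}$ in $Y'$ is, by the definition of contraction, adjacent in $Y$ to $x$ or to $y$, so any segment of $p'$ through $v_{xy}$ can be lifted by replacing $v_{xy}$ with $x$, with $y$, or with $x$ followed by $y$ (using that $x \sim y$); the result is a \emph{simple} path, since $x$ and $y$ are not vertices of $Y'$ and hence are distinct from all other vertices of $p'$.

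Two cases then remain. If $C'(v_{xy}) = L$, pick a second occurrence of $L$ at some $w \neq v_{xy}$ (so $C(w) = L$), and lift only the sub-path of $p'$ from $v_{xy}$ to $w$, arranging $x$ (which carries color $L$) to be an endpoint of the lift; then $x$ and $w$ give two occurrences of $L$, while every other lifted vertex has color at most $L$, contradicting unique-maximality of $C$. If instead $C'(v_{xy}) < L$, both occurrences of $L$ lie at vertices of $p'$ other than $v_{xy}$, so lifting all of $p'$ by any valid replacement of $v_{xy}$ preserves these two occurrences while adding only vertices of color at most $C'(v_{xy}) < L$, again a contradiction. The delicate point to get right is precisely the restriction to a sub-path ending at a max-color vertex in the first case: lifting the entire path can fail to retain the color $L$ contributed by $v_{xy}$ when both of its path-neighbors are adjacent only to $y$, and routing through $x$ as an endpoint is exactly what circumvents this.
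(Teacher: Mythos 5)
Your proof is correct, and it is worth noting that the paper itself offers no argument to compare against: Proposition~\ref{prop:minorchio} is stated with a citation to the vertex-ranking literature (\cite{rankings_of_graphs_1998}) and no proof is given in the text. Your write-up is therefore a self-contained reconstruction of the standard argument: reduce to a single operation, dispose of the two deletion operations by the subgraph observation (the same one the paper uses in Proposition~\ref{prop:subgraphmonotonicity}), and handle contraction by lifting a hypothetical violating path from $Y'$ back to $Y$. The one genuinely delicate point --- that when $C'(v_{xy})$ itself is the repeated maximum color $L$ you must lift only the sub-path from $v_{xy}$ to another $L$-colored vertex $w$, routing the free end through $x$ (possibly via the edge $xy$) so that the color $L=C(x)$ is not lost --- is exactly the step a careless proof would botch, and you both identify it and resolve it correctly; the two-occurrence contradiction then goes through because all other lifted vertices (including $y$) have color at most $L$, and simplicity of the lifted path is guaranteed since $x$ and $y$ are distinct from every vertex of $Y'$. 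The inequality $\chium(X)\leq\chium(Y)$ follows as you say, since the induced coloring takes values among the colors of $C$. In short: no gap, and your argument is essentially the one the cited reference supplies in place of the paper.
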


%The above proof implies a specific process to transform a coloring
%of a graph $G$ when applying one of the graph minor operations,
%so that the coloring remains unique-maximum for the resulting 
%graph: if you delete a vertex or an edge, leave the coloring as it is
%in the remaining vertices,
%and if you contract an edge give to the resulting
%vertex the maximum color of the two endpoints of the edge and
%leave the rest of the coloring as it is in the remaining vertices. 
%Here is a lemma that will be useful in proving lower bounds on
%the unique-maximum chromatic number.

%\begin{lemma}\label{lemma:operationscoloring}
%Assume $C$ is an optimal 
%unique-maximum coloring of $G$ and after a sequence of graph
%minor operations you get coloring $C'$ of $G'$.  If $C'$ is using
%$x$ colors less than $C$, then $\chium(G) \geq \chium(G')+x$.
%\end{lemma}
%\begin{proof}
%Since $C'$ is a unique-maximum coloring of $G'$, we have
%$\chium(G') \leq \chium(G) - x$.
%\end{proof}

The (traditional) chromatic number of a graph is denoted by
$\chi(G)$ and is the smallest number of colors in a vertex
coloring for which adjacent vertices are assigned different 
colors. A simple relation between the chromatic numbers we have
defined so far is the following.

\begin{proposition}\label{prop:comparechis}%
For every graph $G$, $\chi(G) \leq \chicf(G) \leq \chium(G)$.
\end{proposition}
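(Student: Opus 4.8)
The plan is to establish both inequalities directly from the definitions, observing that each chromatic number is governed by a strictly weaker condition than the next. No machinery is needed; the whole content lies in choosing the right paths to test, so I expect this to be a short argument.

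For the right-hand inequality $\chicf(G) \leq \chium(G)$, I would argue that every unique-maximum coloring is automatically a conflict-free coloring. Indeed, fix an optimal unique-maximum coloring $C$ using $\chium(G)$ colors. For any path $p$ in $G$, the maximum color on $p$ occurs exactly once by Definition~\ref{def:ordcol}, so in particular there is \emph{some} color occurring exactly once on $p$, which is precisely the requirement of Definition~\ref{def:cfcol}. Hence $C$ is a conflict-free coloring with $\chium(G)$ colors, and taking the minimum over all conflict-free colorings gives $\chicf(G) \leq \chium(G)$.

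For the left-hand inequality $\chi(G) \leq \chicf(G)$, I would show that every conflict-free coloring is already a proper coloring. Fix an optimal conflict-free coloring $C$ using $\chicf(G)$ colors, and let $\{u,v\}$ be any edge of $G$. The two endpoints $u,v$ together with this edge form a path on two vertices. The conflict-free condition applied to this path demands a color that appears exactly once among $C(u),C(v)$; were $C(u)=C(v)$, the single color present would appear twice and no color would appear exactly once, a contradiction. Therefore $C(u)\neq C(v)$ for every edge, so $C$ is a proper vertex coloring with $\chicf(G)$ colors, giving $\chi(G) \leq \chicf(G)$.

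The argument is elementary, and the only point requiring care is the implicit convention that a single edge counts as a path on two vertices; this is exactly what makes the conflict-free condition force adjacent vertices apart. Everything else is an immediate unwinding of the definitions, so I anticipate no genuine obstacle beyond making this convention explicit.
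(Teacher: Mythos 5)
Your proof is correct and follows essentially the same route as the paper: unique-maximum colorings are conflict-free since the maximum color is a uniquely occurring color, and conflict-free colorings are proper since applying the conflict-free condition to two-vertex paths (the paper's ``paths of length one'') forces adjacent vertices to get distinct colors. Your write-up merely spells out the contradiction at the edge step, which the paper leaves implicit.
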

\begin{proof}
Since every unique-maximum coloring is also a conflict-free coloring,
we have $\chicf(G) \leq \chium(G)$. A traditional coloring can be
defined as a coloring in which paths of length one are
conflict-free. Therefore every conflict-free coloring is
also a traditional coloring and thus $\chi(G) \leq \chicf(G)$.
\end{proof}

Moreover, we prove that both conflict-free and unique-maximum chromatic
numbers are monotone under taking subgraphs.

\begin{proposition}\label{prop:subgraphmonotonicity}%
If $X \subgraph Y$, then $\chicf(X) \leq \chicf(Y)$ 
and $\chium(X) \leq \chium(Y)$.
\end{proposition}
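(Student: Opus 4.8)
The plan is to prove both inequalities simultaneously by showing that any valid coloring of $Y$ restricts to a valid coloring of $X$ when $X \subgraph Y$. The key observation is that a subgraph is a much gentler operation than a minor: $X \subgraph Y$ means $V(X) \subseteq V(Y)$ and $E(X) \subseteq E(Y)$, so every path in $X$ is literally a path in $Y$ using the same vertices. This is the structural fact that makes the whole argument go through, and it is cleaner than the minor case (Proposition~\ref{prop:minorchio}) because no recoloring is needed under edge contraction.

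First I would fix an optimal unique-maximum coloring $C \colon V(Y) \to \{1,\dots,\chium(Y)\}$ of $Y$ and define $C'$ to be its restriction to $V(X)$. I would then take an arbitrary path $p$ in $X$. Since $X \subgraph Y$, the path $p$ is also a path in $Y$, and on its vertices $C'$ agrees with $C$. Because $C$ is unique-maximum on $Y$, the maximum color along $p$ occurs exactly once; hence $C'$ is unique-maximum on $X$. This shows $X$ has a unique-maximum coloring with $\chium(Y)$ colors, so $\chium(X) \leq \chium(Y)$.

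The conflict-free case runs in exactly parallel fashion: starting from an optimal conflict-free coloring $C$ of $Y$ with $\chicf(Y)$ colors, the restriction $C'$ to $V(X)$ is conflict-free on $X$, because any path $p$ in $X$ is a path in $Y$ on which some color appears exactly once, and that color still appears exactly once under $C'$. Therefore $\chicf(X) \leq \chicf(Y)$.

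There is no genuine obstacle here; the only point requiring a moment of care is the justification that every path of the subgraph is a path of the supergraph, which follows immediately from the definition $V(X) \subseteq V(Y)$ and $E(X) \subseteq E(Y)$. I would state this explicitly so the restriction argument is unambiguous, and otherwise keep the proof short, handling the two chromatic numbers in one combined passage since the coloring-restriction idea is identical for both.
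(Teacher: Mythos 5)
Your proof is correct and is essentially the paper's own argument: both restrict an optimal coloring of $Y$ to $V(X)$ and use the fact that every path of $X$ is also a path of $Y$, so the unique-maximum (resp.\ conflict-free) property is inherited. The extra detail you give (spelling out $V(X) \subseteq V(Y)$, $E(X) \subseteq E(Y)$ and checking the path condition explicitly) is just a fuller write-up of the same one-line idea.
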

\begin{proof}
Take the restriction of any
conflict-free or unique-maximum coloring of graph $Y$ to the vertex set
$V(X)$. This is a conflict-free or unique maximum 
coloring of graph $X$, respectively,
because the set of paths of graph $X$ is a subset of all paths of
$Y$.
\end{proof}

% $G - v$ and $G - S$ notations

If $v$ is a vertex (resp. $S$ is a set of vertices)
of graph $G=(V,E)$, denote by $G-v$ (resp. $G-S$)
the graph obtained from $G$ by deleting
vertex $v$ (resp. vertices of $S$) and adjacent edges.

% separators here
\begin{definition}
A subset $S \issubsetof V$ is a \emph{separator} of a connected 
graph $G=(V,E)$ if $G-S$ is disconnected or empty. A separator $S$ is
\emph{minimal} if no proper subset $S' \isstrictsubsetof S$ 
is a separator.
\end{definition}

\section{Deciding whether a coloring is conflict-free}%
\label{sec:conpcheckcf}

In this section, we show a difference between
the two chromatic numbers $\chium$ and $\chicf$, from the
computational complexity aspect. For the notions of complexity
classes, hardness, and completeness, we refer, for example, to
\cite{PapadimitriouCC}.

As we mentioned before, in \cite{Pothen1988TR,LTT1989dam}, it is shown
that
computing $\chium$ for general graphs is NP-complete. To be exact
the following problem is NP-complete: ``Given a graph $G$ and an
integer $k$, is it true that $\chium(G) \leq k$?''.
The above fact implies that it is possible to check in polynomial
time whether a given coloring of a graph is unique-maximum with
respect to paths.
We remark that both the conflict-free and the unique-maximum
properties have to be true in every path of the graph.  
However, a graph with $n$ vertices 
can have exponential in $n$ number of distinct sets of vertices,
each one of which is a vertex set of a path in the graph.
For unique-maximum colorings we can find a shortcut as follows:
Given a (connected) graph $G$ and a vertex coloring of it, 
consider the set of vertices $S$ of unique colors.
Let $u, v\in V\setminus S$ such that they both have the maximum color
that appears in $V\setminus S$.  
If there is a path in $G-S$ from $u$ to $v$, then this path violates the 
unique maximum property. Therefore, 
$S$ has to be a separator in $G$,
which can be checked in polynomial time, otherwise the coloring is
not unique-maximum. If $G-S$ is not empty, we
can proceed analogously for each of its components. 
For conflict-free colorings there is no such shortcut, unless
$\text{coNP}=\text{P}$, as the following theorem implies.

\begin{theorem}
It is coNP-complete to decide whether a given graph and a vertex
coloring of it is conflict-free with respect to paths.
\end{theorem}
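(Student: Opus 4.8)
The plan is to establish the two directions of completeness separately, dispatching membership first since it is routine and then concentrating on hardness.

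For membership in coNP, I would argue that the complementary problem---``given $(G,C)$, does there exist a path that is \emph{not} conflict-free?''---lies in NP. A witness is simply such a bad path $p$, presented as its sequence of vertices. Since a simple path has at most $|V|$ vertices, the witness has polynomial size, and one checks in polynomial time that $p$ is a genuine path of $G$ (consecutive vertices adjacent, no repetitions) and that no color occurs exactly once among the vertices of $p$ (equivalently, every color appearing on $p$ appears at least twice). Hence the complement is in NP and the original problem is in coNP.

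The harder direction is coNP-hardness, for which I would reduce an NP-complete problem to the \emph{existence} of a bad path. A natural candidate is the \textsc{Hamiltonian Path} problem on an input graph $H$ with two prescribed endpoints $s,t$: from $H$ I would build a graph $G$ and a coloring $C$ so that $G$ admits a bad path with respect to $C$ if and only if $H$ has a Hamiltonian $s$--$t$ path. Deciding conflict-freeness of $(G,C)$ would then decide \emph{non}-Hamiltonicity of $H$, giving the hardness. The guiding idea for the coloring is to force a bad path to \emph{record} a traversal of $H$: relevant colors should be \emph{paired up}, each placed on exactly two vertices sitting far apart in $G$, so that the requirement ``every used color appears at least twice'' becomes, for a paired color, the global condition ``both of its two vertices are used, or neither is.'' A spine structure should then force the path to make, for each vertex of $H$, a binary visited/not-visited choice in a prescribed order, turning a bad path into the encoding of a Hamiltonian traversal.

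The main obstacle---and the place where the construction must be designed most carefully---is twofold. First, I must \emph{force coverage}: the pairing condition by itself only says each color's two vertices are used together or not at all, which is satisfied by many short or partial paths, so the gadgets must guarantee that the only balanced paths are those visiting \emph{all} of $H$; this is precisely what ties a bad path to a genuinely Hamiltonian one. Second, and dually, I must \emph{prevent spurious bad paths}: the coloring $C$ should be conflict-free on every unintended path (in particular it must be proper, since two adjacent equally colored vertices already form a bad path, and more generally no short alternating pattern such as $1,2,1,2$ may occur), so that a bad path can arise \emph{only} from the intended Hamiltonian traversal. Balancing these two requirements simultaneously is the technical heart of the proof; once a gadget achieving both is in hand, checking that the reduction runs in polynomial time and that the equivalence holds in both directions is straightforward.
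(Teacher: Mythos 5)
Your coNP-membership argument is correct and is exactly what the paper does: the complement problem (existence of a non-conflict-free path) is in NP with the bad path as witness. Your hardness strategy also points in the right direction, and indeed matches the paper's at the level of ideas: reduce from Hamiltonian path, pair up the colors so that each color appears on exactly two vertices, so that a bad path must use both or neither vertex of each color, forcing it to encode a Hamiltonian traversal.

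However, there is a genuine gap: you never exhibit the construction, and you say so yourself (``once a gadget achieving both is in hand\ldots''). The gadget \emph{is} the proof; everything you wrote before that sentence is a statement of the design constraints, not a solution to them. For comparison, the paper resolves these constraints concretely as follows. Given $G$ on vertices $v_1,\dots,v_n$, it builds $G^*$ from \emph{two disjoint copies} $\Gup,\Gdown$ of $G$, joined for each $i$ by a ``connecting path'' $P_i$ from $\vup_i$ to $\vdown_i$ through $n-1$ new internal vertices $v_{i,j}$ ($j\ne i$); the coloring gives $\vup_i$ and $\vdown_i$ the same color $i$, and gives $v_{i,j}$ and $v_{j,i}$ (which lie on \emph{different} connecting paths) the same color. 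Every color appears exactly twice, and the two occurrences are separated so that no short spurious bad path exists. The ``force coverage'' problem you identify is then solved by a chain of structural deductions, none of which is trivial: a bad path cannot stay inside one copy or one $P_i$; it must contain some $P_i$ entirely; since the internal colors of $P_i$ are spread over all other connecting paths, it must touch every connecting path; hence it must contain every $\vup_i$ and $\vdown_i$; and then a nontrivial surgery argument (extending the path, closing it into a Hamiltonian cycle or rerouting at an endpoint) shows one may assume it is a Hamiltonian path of $G^*$ with both endpoints in $V(\Gup)\cup V(\Gdown)$, whose shape forces the visiting order $\vup_1\vdown_1\vdown_2\vup_2\cdots$, from which a Hamiltonian path of $G$ is read off. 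The converse direction also requires an explicit construction (the snake-like path through all of $G^*$ when $G$ has a Hamiltonian path). Without supplying a gadget and both directions of the equivalence at this level of detail, your proposal is an announced plan rather than a proof.
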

\begin{proof}
In order to prove that the problem is coNP-complete, we prove that
it is coNP-hard and also that it belongs to coNP.

We show coNP-hardness by 
a reduction from the complement of the Hamiltonian path problem.
For every graph $G$, we construct in polynomial time 
a graph $G^*$ of polynomial size
together with a coloring $C$ of its vertices such that $G$ has no
Hamiltonian path if and only if $C$ is conflict-free with respect
to paths of $G^*$. 

Assume the vertices of graph $G$ are $v_1$, $v_2$, \ldots, $v_n$.
Then, graph $G^*$ consists of two isomorphic copies of $G$,
denoted by  $\Gup$ and $\Gdown$,
with vertex sets
$\vup_1$, $\vup_2$, \ldots, $\vup_n$ and 
$\vdown_1$, $\vdown_2$, \ldots, $\vdown_n$, respectively. 
Additionally, for every $1\le i\le n$,  $G^*$ contains the path 
$$P_i=\vup_i, v_{i,1}, v_{i,2}, \dots, v_{i,i-1}, v_{i,i+1}, \dots, v_{i,n},
\vdown_i,$$
where, for every $i$, 
$v_{i,1}, v_{i,2}, \dots, v_{i,i-1}, v_{i,i+1}, \dots, v_{i,n}$ 
are new vertices.
We use the following notation for the two possible directions 
to traverse this path:
$$\Pdown_{i} = 
  (v_{i,1},\dots, v_{i,i-1}, v_{i,i+1}, \dots, v_{i,n}),$$
$$\Pup_{i} = 
  (v_{i,n},\dots, v_{i,i+1}, v_{i,i-1}, \dots, v_{i,1}).$$
We call paths $P_i$ \emph{connecting paths}.

We now describe the coloring of $V(G^*)$. For every $i$, we set 
$C(\vup_i) = C(\vdown_i) = i$. For every $i > j$, we set
$C(v_{i,j})=C(v_{j,i})=n+\binom{i-1}{2}+j$.
Observe that every color occurs exactly in two vertices of $G^*$.

If $G$ has a Hamiltonian path, say $v_1v_2 \dotsc v_n$, then there
is a path through all vertices of $G^*$, either
\[\vup_1 \Pdown_1 \vdown_1 \vdown_2 \Pup_2 \vup_2 \dots 
  \vup_{n-1} \Pdown_{n-1} \vdown_{n-1} \vdown_n \Pup_n \vup_n, 
  \text{ if $n$ is even,}
\]
or
\[\vup_1 \Pdown_1 \vdown_1 \vdown_2 \Pup_2 \vup_2 \dots 
  \vdown_{n-1} \Pup_{n-1} \vup_{n-1} \vup_n \Pdown_n \vdown_n, 
  \text{ if $n$ is odd.}
\]
But then, this path has no uniquely occurring color and thus
$C$ is not conflict-free.

Suppose now that $C$ is not a conflict-free coloring. We prove that
$G$ has a Hamiltonian path.

By the assumption, 
there is a path
$P$ in $G^*$ which is not conflict-free. This path must contain
none or both vertices of each color.
Therefore, $P$
can not be completely contained in $\Gup$, or in $\Gdown$, or in
some $P_i$. Also, $P$ can not contain only one of $\vup_i$ and
$\vdown_i$, for some $i$. Therefore, $P$ must contain both
$\vup_i$ and $\vdown_i$ for a non-empty subset of indices $i$. 

Then, it must contain completely some $P_i$, because vertices
in $\Gup$ and $\Gdown$ can only be connected with some complete
$P_i$. But since each one of the $n-1$ colors of this
$P_i$ occurs in a different connecting paths, 
$P$ must contain a vertex in every connecting path.
But then $P$ must
contain every $\vup_i$ and $\vdown_i$, because vertices in
$P_i$ can only be connected to the rest of the graph through one
of $\vup_i$ or $\vdown_i$. 

Suppose that $P$ is not a Hamiltonian path of $G^*$.
Observe that 
if $P$ does not contain all vertices of some connecting path $P_i$, then
one of its end vertices should be there. 
If $P$ does not contain vertex $v_{i,j}$, then it can not contain $v_{j,i}$
either.
But then one end vertex of $P$ should be on $P_i$, the other one on $P_j$, and
all other vertices of $G^*$ are on $P$. Therefore, we can extend $P$ such that
it contains $v_{i,j}$ and $v_{j,i}$ as well. So assume in the sequel that $P$
is a Hamiltonian path of $G^*$.

Now we modify $P$, if necessary, so that 
both of its end-vertices $e$ and $f$ lie
in $V(\Gup)\cup V(\Gdown)$.
If $e$ and $f$ are adjacent in $G^*$, then add the edge $ef$ to $P$ and we get
a Hamiltonian cycle of  $G^*$. Now remove one of its edges which is either in
$\Gup$, or in $\Gdown$
and get the desired Hamiltonian path.
Suppose now that $e$ and $f$ are not adjacent, and $e$ is on one of the
connecting paths. Then $e$ should be adjacent to the end vertex $e'$ of that
connecting path, which is in $\Gup$ or in $\Gdown$.
Add edge $ee'$ to $P$. We get a cycle and a path joined in $e'$. Remove the
other edge of the cycle adjacent to $e'$. We have a Hamiltonian path now,
whose end vertex is $e'$ instead of $e$. 
Proceed analogously for $f$, if necessary.

Now we have a Hamiltonian path $P$ of $G^*$ with end-vertices in $V(\Gup)\cup
V(\Gdown)$. Then, $P$ is in the form, say, 
\[\vup_1 \Pdown_1 \vdown_1 \vdown_2 \Pup_2 \vup_2 \dots 
  \vup_{n-1} \Pdown_{n-1} \vdown_{n-1} \vdown_n \Pup_n \vup_n, 
  \text{ if $n$ is even,}
\]
or
\[\vup_1 \Pdown_1 \vdown_1 \vdown_2 \Pup_2 \vup_2 \dots 
  \vdown_{n-1} \Pup_{n-1} \vup_{n-1} \vup_n \Pdown_n \vdown_n, 
  \text{ if $n$ is odd.}
\]
%because if the first occurrence of $\vup_i$ or $\vdown_i$ is not
%followed by $P_i$, then $p$ must end with some vertex of
%$P_i$, which is impossible.
But then, $v_1v_2\dotsc v_n$ is a 
Hamiltonian path in $G$. 

\smallskip

Finally, the problem is in coNP because one can verify that a
coloring of a given graph is not conflict-free in polynomial time, 
by giving the corresponding path.
\end{proof}

We show an example graph $G$, its
transformation graph $G^*$, and its coloring $C$ in
figure~\ref{fig:graphexmp}.

\def\scaleconpfigures{0.6}%
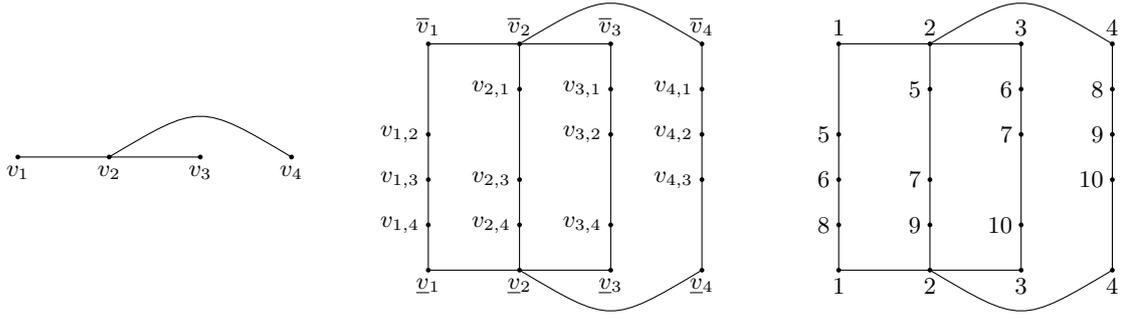
\begin{figure}[htb]
\centering
\begin{tikzpicture}[scale=\scaleconpfigures]
%\small
\footnotesize % smaller size of labels
\begin{scope}[yshift=2.5cm] % Graph G
\coordinate (v1) at (0,0);
\coordinate (v2) at (2,0);
\coordinate (v3) at (4,0);
\coordinate (v4) at (6,0);

\foreach \v in {v1, v2, v3, v4}
   \fill (\v) circle (1.5pt);

\draw (v1)--(v2)--(v3);
\draw (v2) .. controls (4, 1.2) .. (v4);

\node [below] at (v1) {$v_1$};
\node [below] at (v2) {$v_2$};
\node [below] at (v3) {$v_3$};
\node [below] at (v4) {$v_4$};
\end{scope}  % end of graph G
\begin{scope}[xshift=9cm] % graph G^*
\coordinate (vup1) at (0,5);
\coordinate (vup2) at (2,5);
\coordinate (vup3) at (4,5);
\coordinate (vup4) at (6,5);

\coordinate (vdown1) at (0,0);
\coordinate (vdown2) at (2,0);
\coordinate (vdown3) at (4,0);
\coordinate (vdown4) at (6,0);

\coordinate (v12) at (0,3);
\coordinate (v13) at (0,2);
\coordinate (v14) at (0,1);

\coordinate (v21) at (2,4);
\coordinate (v23) at (2,2);
\coordinate (v24) at (2,1);

\coordinate (v31) at (4,4);
\coordinate (v32) at (4,3);
\coordinate (v34) at (4,1);

\coordinate (v41) at (6,4);
\coordinate (v42) at (6,3);
\coordinate (v43) at (6,2);

\foreach \v in {
    vup1, vup2, vup3, vup4, 
    vdown1, vdown2, vdown3, vdown4, 
    v12, v13, v14, v21, v23, v24,
    v31, v32, v34, v41, v42, v43}
   \fill (\v) circle (1.5pt);

\draw (vup1)--(vup2)--(vup3);
\draw (vup2) .. controls (4, 6.2) .. (vup4);

\draw (vdown1)--(vdown2)--(vdown3);
\draw (vdown2) .. controls (4, -1.2) .. (vdown4);

\draw (vup1)--(v12)--(v13)--(v14)--(vdown1);
\draw (vup2)--(v21)--(v23)--(v24)--(vdown2);
\draw (vup3)--(v31)--(v32)--(v34)--(vdown3);
\draw (vup4)--(v41)--(v42)--(v43)--(vdown4);

\node [above] at (vup1) {$\vup_1$};
\node [above] at (vup2) {$\vup_2$};
\node [above] at (vup3) {$\vup_3$};
\node [above] at (vup4) {$\vup_4$};

\node [below] at (vdown1) {$\vdown_1$};
\node [below] at (vdown2) {$\vdown_2$};
\node [below] at (vdown3) {$\vdown_3$};
\node [below] at (vdown4) {$\vdown_4$};

\node [left] at (v12) {$v_{1,2}$};
\node [left] at (v13) {$v_{1,3}$};
\node [left] at (v14) {$v_{1,4}$};

\node [left] at (v21) {$v_{2,1}$};
\node [left] at (v23) {$v_{2,3}$};
\node [left] at (v24) {$v_{2,4}$};

\node [left] at (v31) {$v_{3,1}$};
\node [left] at (v32) {$v_{3,2}$};
\node [left] at (v34) {$v_{3,4}$};

\node [left] at (v41) {$v_{4,1}$};
\node [left] at (v42) {$v_{4,2}$};
\node [left] at (v43) {$v_{4,3}$};
\end{scope} % end of graph G^*
\begin{scope}[xshift=18cm] % coloring C
\coordinate (vup1) at (0,5);
\coordinate (vup2) at (2,5);
\coordinate (vup3) at (4,5);
\coordinate (vup4) at (6,5);

\coordinate (vdown1) at (0,0);
\coordinate (vdown2) at (2,0);
\coordinate (vdown3) at (4,0);
\coordinate (vdown4) at (6,0);

\coordinate (v12) at (0,3);
\coordinate (v13) at (0,2);
\coordinate (v14) at (0,1);

\coordinate (v21) at (2,4);
\coordinate (v23) at (2,2);
\coordinate (v24) at (2,1);

\coordinate (v31) at (4,4);
\coordinate (v32) at (4,3);
\coordinate (v34) at (4,1);

\coordinate (v41) at (6,4);
\coordinate (v42) at (6,3);
\coordinate (v43) at (6,2);

\foreach \v in {
    vup1, vup2, vup3, vup4, 
    vdown1, vdown2, vdown3, vdown4, 
    v12, v13, v14, v21, v23, v24,
    v31, v32, v34, v41, v42, v43}
   \fill (\v) circle (1.5pt);

\draw (vup1)--(vup2)--(vup3);
\draw (vup2) .. controls (4, 6.2) .. (vup4);

\draw (vdown1)--(vdown2)--(vdown3);
\draw (vdown2) .. controls (4, -1.2) .. (vdown4);

\draw (vup1)--(v12)--(v13)--(v14)--(vdown1);
\draw (vup2)--(v21)--(v23)--(v24)--(vdown2);
\draw (vup3)--(v31)--(v32)--(v34)--(vdown3);
\draw (vup4)--(v41)--(v42)--(v43)--(vdown4);

\node [above] at (vup1) {$1$};
\node [above] at (vup2) {$2$};
\node [above] at (vup3) {$3$};
\node [above] at (vup4) {$4$};

\node [below] at (vdown1) {$1$};
\node [below] at (vdown2) {$2$};
\node [below] at (vdown3) {$3$};
\node [below] at (vdown4) {$4$};

\node [left] at (v12) {$5$};
\node [left] at (v13) {$6$};
\node [left] at (v14) {$8$};

\node [left] at (v21) {$5$};
\node [left] at (v23) {$7$};
\node [left] at (v24) {$9$};

\node [left] at (v31) {$6$};
\node [left] at (v32) {$7$};
\node [left] at (v34) {$10$};

\node [left] at (v41) {$8$};
\node [left] at (v42) {$9$};
\node [left] at (v43) {$10$};
\end{scope} % end of coloring C
\end{tikzpicture}
\caption{Example graphs $G$, $G^*$, and coloring $C$ of $G^*$}
\label{fig:graphexmp}
\end{figure}

\section{The two chromatic numbers of general graphs}%
\label{sec:ratiogeneral}

We have seen that $\chium(G) \geq \chicf(G)$
(proposition~\ref{prop:comparechis}). 
In this section we show that $\chium(G)$ can not be larger
than an exponential function of $\chicf(G)$. We also provide an
infinite
sequence of graphs $H_1$, $H_2$, \ldots, for which
$\lim_{k\to\infty}(\chium(H_k)/\chicf(H_k)) = 2$.

The path of $n$ vertices is denoted by $P_n$.
It is known that $\chium(P_n) = \floor{\log_{2}{n}}+1$ (see for
example \cite{ELRS03jo}). 

\begin{lemma}\label{lemma:pathcf}
For every path $P_n$, $\chicf(P_n) = \floor{\log_2{n}}+1$.
\end{lemma}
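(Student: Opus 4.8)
The plan is to prove $\chicf(P_n) = \floor{\log_2 n} + 1$ by establishing matching upper and lower bounds. Since we already know from Proposition~\ref{prop:comparechis} that $\chicf(P_n) \leq \chium(P_n) = \floor{\log_2 n}+1$, the upper bound is immediate: any unique-maximum coloring is conflict-free, so the conflict-free chromatic number cannot exceed the known unique-maximum value. Thus the entire content of the lemma lies in proving the matching lower bound $\chicf(P_n) \geq \floor{\log_2 n}+1$, equivalently that a path admitting a conflict-free coloring with $k$ colors can have at most $2^k - 1$ vertices.

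Let me think about this lower bound. Let $f(k)$ denote the maximum number of vertices in a path that can be conflict-free colored with $k$ colors; I want to show $f(k) \leq 2^k - 1$, which rearranges to give the claimed bound. The natural approach is induction on $k$ via the structure forced by the highest color. The key observation is that in a conflict-free coloring of the whole path $P_n$ (viewed as a single path, which is itself one of the paths that must be conflict-free), there must be some color appearing exactly once; but more usefully, I would focus on color $k$ and argue about how often it can appear.

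\textbf{Main approach for the lower bound.} I would argue that in a conflict-free coloring of $P_n$ with colors $\{1,\dots,k\}$, consider the positions where the largest color $k$ occurs. Look at any maximal subpath (a contiguous block of consecutive vertices) that uses only colors from $\{1,\dots,k-1\}$; such a subpath is itself a path in $P_n$, so its restriction must be a conflict-free coloring using at most $k-1$ colors, and hence by induction contains at most $f(k-1) \leq 2^{k-1}-1$ vertices. The crucial step is to bound the number of occurrences of color $k$: I claim color $k$ can appear at most once along the whole path. Indeed, if color $k$ appeared at two vertices, the subpath between and including those two vertices would contain two copies of $k$ while every other color present would also need a unique occurrence — but I must verify a subpath between the two $k$'s still admits a unique color, so the cleaner statement is that between any two consecutive occurrences of the maximum color there is enough room. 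I expect the clean inductive statement to be: removing the (at most one) vertex colored $k$ splits $P_n$ into at most two subpaths, each conflict-free with $k-1$ colors, giving $f(k) \leq 2 f(k-1) + 1$, which with base case $f(1)=1$ solves to $f(k) = 2^k - 1$.

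\textbf{Main obstacle.} The delicate point is justifying that the maximum color occurs at most once, or more precisely setting up the induction so that the split is into at most two pieces each colored with strictly fewer colors. If color $k$ could occur many times, the recursion $f(k) \le 2f(k-1)+1$ would fail. I would prove the single-occurrence claim by contradiction: if color $k$ (or indeed any particular color, chosen as a potential ``unique'' witness) recurs, I consider the subpath spanning from the first to the last occurrence of $k$ and argue that this subpath, together with the inductive control on the interior blocks, cannot be conflict-free unless the recursion budget is exceeded — the counting here is where care is needed to avoid an off-by-one error and to handle the boundary blocks correctly. Once the recurrence $f(k) \leq 2f(k-1)+1$ with $f(1)=1$ is established, solving it to get $2^k-1$ and translating back to $n \leq 2^k - 1 \iff k \geq \floor{\log_2 n}+1$ is routine.
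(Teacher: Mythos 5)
Your upper bound and your target recurrence $f(k) \le 2f(k-1)+1$ are both fine, but the claim you place at the heart of the lower bound --- that in a conflict-free coloring the maximum color $k$ can occur at most once on the whole path --- is false, and no amount of care in the counting will rescue it. Conflict-freeness only guarantees that \emph{some} color occurs exactly once on each path; it need not be the maximum one. Concretely, coloring the vertices of $P_3$ in order by $2,1,2$ is conflict-free with $2$ colors, and coloring $P_5$ by $3,2,1,2,3$ is conflict-free with $3$ colors (which is optimal, since $\chicf(P_5)=3$); in both cases the maximum color occurs twice. So the dichotomy ``remove the (at most one) vertex colored $k$'' does not exist, and your fallback plan --- deriving a contradiction from two occurrences of $k$ by examining the subpath between them --- cannot work either, since these examples are themselves optimal conflict-free colorings in which $k$ recurs.

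The repair is exactly the observation you mention in passing and then abandon: apply conflict-freeness to the whole path $P_n$ to get some color $c$ (not necessarily the maximum) occurring at exactly one vertex $v$. Since $c$ occurs nowhere else on $P_n$, deleting $v$ leaves two subpaths, each conflict-free colored with at most $k-1$ colors; this yields your recurrence $f(k)\le 2f(k-1)+1$ and hence $n\le 2^k-1$. This is essentially the paper's proof, phrased as induction on $k$ rather than on $n$: the paper notes that one of the two subpaths has at least $\floor{n/2}$ vertices, so $\chicf(P_n) \ge 1 + \chicf(P_{\floor{n/2}})$, which unwinds to $\chicf(P_n)\ge\floor{\log_2{n}}+1$.
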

\begin{proof}
By proposition~\ref{prop:comparechis}, $\chicf(P_n) \leq
\chium(P_n)$. We prove a matching lower bound by induction.
We have $\chicf(P_1) \geq 1$. For $n > 1$, there is a uniquely
occurring color in any conflict-free coloring of the the whole path 
$P_n$. Then, $\chicf(P_n) \geq 1 + \chicf(P_{\floor{n/2}})$, which 
implies $\chicf(P_n) \geq \floor{\log_2{n}}+1$.
\end{proof}

Moreover, we are going to use the following result (lemma 5.1 of
\cite{orderedcoloring}): If the longest path of $G$ has $k$ vertices, 
then $\chium(G) \leq k$.

\begin{proposition}
For every graph $G$, $\chium(G) \leq 2^{\chicf(G)}-1$.
\end{proposition}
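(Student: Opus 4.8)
The plan is to reduce the statement to the two facts assembled just before it: the longest-path bound $\chium(G)\le k$ (where $k$ is the number of vertices of the longest path of $G$, lemma~5.1 of \cite{orderedcoloring}), and the exact value $\chicf(P_n)=\floor{\log_2 n}+1$ from Lemma~\ref{lemma:pathcf}. The whole proposition will follow once I show that a bound on $\chicf(G)$ forces a bound on the length of the longest path in $G$.

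Concretely, I would set $c=\chicf(G)$ and consider a longest path in $G$, say on $n$ vertices, which as a subgraph is a copy of $P_n$. By the subgraph-monotonicity of the conflict-free chromatic number (Proposition~\ref{prop:subgraphmonotonicity}), $\chicf(P_n)\le\chicf(G)=c$. Combining this with Lemma~\ref{lemma:pathcf} gives $\floor{\log_2 n}+1\le c$, i.e.\ $\floor{\log_2 n}\le c-1$, which is equivalent to $\log_2 n<c$ and hence to $n\le 2^{c}-1$. Thus every path of $G$, in particular the longest one, has at most $2^{\chicf(G)}-1$ vertices.

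Finally I would apply the longest-path bound: since the longest path of $G$ has $k\le 2^{\chicf(G)}-1$ vertices, we get $\chium(G)\le k\le 2^{\chicf(G)}-1$, as desired.

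I do not expect a genuine obstacle here; the content of the argument is the single observation that the conflict-free number controls the length of the longest path through Lemma~\ref{lemma:pathcf} and monotonicity. The only point requiring a little care is the arithmetic step $\floor{\log_2 n}+1\le c \Longrightarrow n\le 2^{c}-1$, where one must resist writing $n\le 2^{c}$: the strict inequality $\log_2 n<c$ is what yields the sharp $2^{\chicf(G)}-1$ bound, matching the statement exactly.
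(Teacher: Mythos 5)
Your proposal is correct and follows essentially the same route as the paper's own proof: bound $\chicf$ of any path via subgraph monotonicity, invoke Lemma~\ref{lemma:pathcf} to cap the longest path at $2^{\chicf(G)}-1$ vertices, and finish with lemma~5.1 of \cite{orderedcoloring}. The only difference is that you spell out the floor-function arithmetic that the paper leaves implicit, which is a fine (and careful) addition.
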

\begin{proof}
Set $j=\chicf(G)$. For any path $P\issubgraphof G$,
$\chicf(P)\le j$, therefore, by lemma~\ref{lemma:pathcf}, 
the longest path has at most $2^{j}-1$ vertices,
so by lemma~5.1 of \cite{orderedcoloring}, $\chium(G) \leq 2^j-1$.
\end{proof}

We define recursively the following sequence of graphs:
Graph $H_0$ is a single vertex. Suppose that we have already defined
$H_{k-1}$. Then $H_k$ consists of 
%
%(a)~a $K_{2\chicf(H_{k-1})+1}$, 
%i.e., a clique of $2\chicf(H_{k-1})+1$ vertices, 
%(b)~$2\chicf(H_{k-1})+1$
%isomorphic copies of $H_{k-1}$, and 
%(c)~$2\chicf(H_{k-1})+1$ edges, each one of them connecting a
%different vertex of $K_{2\chicf(H_{k-1})+1}$ with any vertex of a
%different copy of $H_{k-1}$.
%
(a) a $K_{2^{k+1}-1}$, (b) 
${2^{k+1}-1}$ copies of $H_{k-1}$, and (c) 
for for each $i$, $1\le i\le 2^{k+1}-1$, the $i$-th vertex of the 
$K_{2^{k+1}-1}$ is connected by an edge 
to one of the vertices of the $i$-th copy of
$H_{k-1}$.

%\begin{comment}
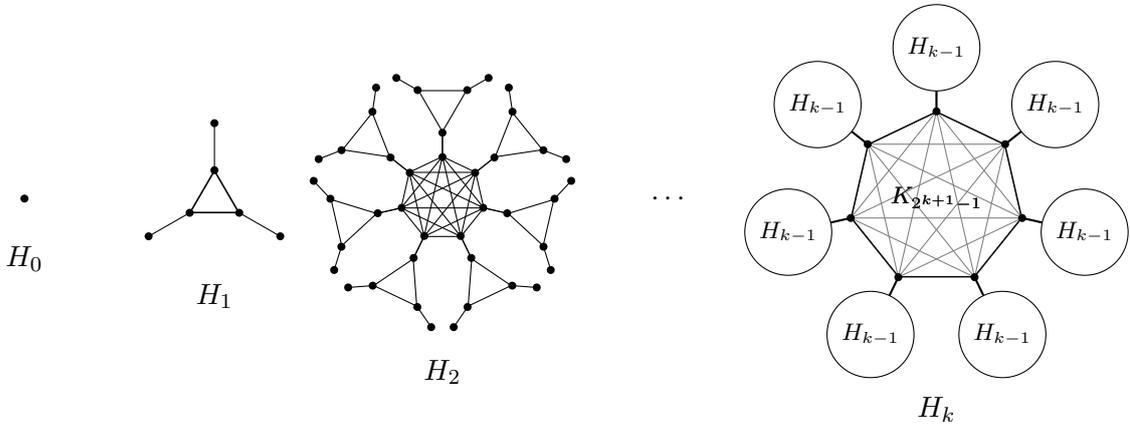
\begin{figure}[htb]
\centering
\begin{tikzpicture}
\newcommand*{\vertexsize}{1.5pt}%
%
% H_0
\begin{scope}
\fill (0,0) circle (\vertexsize);
\node [below] at (0,-0.5) {$H_0$};
\end{scope}
%
% H_1
\begin{scope}[xshift=+2.5cm]
\node [regular polygon, regular polygon sides = 3, name = tri,
       draw] 
  at (0,0) {};
\foreach \i in {1,...,2} {
   \foreach \j in {\i, ..., 3} {
      \ifnum \i < \j
         \draw (tri.corner \i) -- (tri.corner \j);
      \fi
   }
}
\foreach \i in {1, ..., 3}
{
   \draw 
      (tri.corner \i) -- (-30+\i*120:1cm);
   \fill (-30+\i*120:1cm) circle (\vertexsize);
   \fill (tri.corner \i) circle (\vertexsize);
}
\node [below] at (0,-1) {$H_1$};
\end{scope}
%
% H_2
\begin{scope}[xshift=+5.5cm]
\newcommand*{\awaylen}{1.1cm}
\coordinate (center) at (0,0);
%
%
% clique
%\draw [fill=white] (0,0) circle (20pt);
\node [regular polygon, regular polygon sides = 7, name = clique,
       minimum size = 1.1cm] 
  at (center) { };
\foreach \i in {1,...,6} {
   \foreach \j in {\i, ..., 7} {
      \ifnum \i < \j
         \draw (clique.corner \i) -- (clique.corner \j); %
      \fi
   }
}
%\node [regular polygon, regular polygon sides = 7, 
%       draw] 
%  at (center) {$K_{7}$};
%
%\foreach \i in {1, ..., 7} 
%   \fill (clique.corner \i) circle (\vertexsize);
%
%
%
% edges
\foreach \i in {1, ..., 7}
{
   \draw (clique.corner \i) -- (38.57142857+\i*51.42857:\awaylen);
}%
%
%
% isomorphic copies
\newcommand*{\moreawaylen}{1.25cm}%
\newcommand*{\smallawaylen}{0.7cm}%
\foreach \i in {1, ..., 7}
{
   \begin{scope}[
     shift={(38.57142857+\i*51.42857:\moreawaylen)}]
   \begin{scope}
       [rotate = 180-51.42857+\i*51.42857]
      \node [regular polygon, regular polygon sides = 3, 
             rotate=180-51.42857+\i*51.42857, 
             draw, fill=white, name = tric] 
         at (0,0) {};
      \foreach \j in {1, ..., 3}
      {
         \fill (-30+\j*120:\smallawaylen) circle (\vertexsize);
         \fill (tric.corner \j) circle (\vertexsize);
         \draw (tric.corner \j) -- (-30+\j*120:\smallawaylen);
      }
   \end{scope}
   \end{scope}
}
\node [below] at (0,-2) {$H_2$};
\end{scope}
\begin{scope}[xshift=+8.5cm]
\node at (0,0) {$\dotsm$};
\end{scope}
%
% H_k
\begin{scope}[xshift=+12.0cm]
\newcommand*{\awaylen}{2cm}
\coordinate (center) at (0,0);
%
%
% clique
%\draw [fill=white] (0,0) circle (20pt);
\node [regular polygon, regular polygon sides = 7, name = clique,
       draw] 
  at (center) {\footnotesize $K_{2^{k+1}-1}$};
\foreach \i in {1,...,6} {
   \foreach \j in {\i, ..., 7} {
      \ifnum \i < \j
         \draw [gray] (clique.corner \i) -- (clique.corner \j);
      \fi
   }
}
\node [regular polygon, regular polygon sides = 7, 
       draw] 
  at (center) {\footnotesize $K_{2^{k+1}-1}$};
\foreach \i in {1, ..., 7} 
   \fill (clique.corner \i) circle (\vertexsize);
%
%
%
% edges
\foreach \i in {1, ..., 7}
{
   \draw [thick] 
      (clique.corner \i) -- (38.57142857+\i*51.42857:\awaylen);
}
%
%
% isomorphic copies
\foreach \i in {1, ..., 7}
{
   \node [circle, draw, fill=white] 
      at (38.57142857+\i*51.42857:\awaylen) {\footnotesize $H_{k-1}$};
}
\node [below] at (0,-2.5) {$H_k$};
\end{scope}
\end{tikzpicture}
\caption{Sequence of graphs}
\label{fig:H_k_sequence}
\end{figure}
%\end{comment}

\begin{lemma}\label{lemma:hedgehogcf}
For $k\ge 0$, $\chicf(H_k) = 2^{k+1}-1$.
\end{lemma}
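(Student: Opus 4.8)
I would prove the two inequalities separately, establishing the upper bound by induction on $k$.

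\emph{Lower bound.} The bound $\chicf(H_k)\ge 2^{k+1}-1$ needs no induction. The graph $H_k$ contains $K_{2^{k+1}-1}$ as a subgraph, so by Proposition~\ref{prop:subgraphmonotonicity} we have $\chicf(H_k)\ge\chicf(K_{2^{k+1}-1})$. In a complete graph any two vertices are joined by an edge, so a conflict-free coloring must be proper; hence $\chicf(K_m)\ge\chi(K_m)=m$ by Proposition~\ref{prop:comparechis}. Taking $m=2^{k+1}-1$ gives the claim.

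\emph{Upper bound: the coloring.} For the matching upper bound I would induct on $k$, the base case $H_0$ being a single vertex with $\chicf(H_0)=1=2^{1}-1$. Write $m=2^{k+1}-1$, which is odd, and assume $\chicf(H_{k-1})=2^{k}-1$. The key device is a \emph{regular tournament} $T$ on the vertex set $\{1,\dots,m\}$ (one exists since $m$ is odd; e.g.\ the rotational tournament), in which every vertex has out-degree $(m-1)/2=2^{k}-1$. I color the $i$-th clique vertex $c_i$ with color $i$, and I color the $i$-th copy of $H_{k-1}$ by a conflict-free coloring whose palette is $S_i$, the out-neighborhood of $i$ in $T$; this is possible because $\lvert S_i\rvert=2^{k}-1=\chicf(H_{k-1})$, so by induction such a coloring exists and its $2^k-1$ colors can be relabelled onto $S_i$. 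The budget matches exactly because $(m-1)/2=2^{k}-1$. Note that $i\notin S_i$, so copy $i$ never reuses its own clique color $i$.

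\emph{Upper bound: structure of paths and verification.} The crux is to understand how a path $P$ of $H_k$ can interact with the clique. Since each copy of $H_{k-1}$ is joined to the rest of $H_k$ only through the single connecting edge at its attaching vertex, once $P$ enters a copy its only exit is that copy's unique clique neighbor; consequently the clique vertices used by $P$, say $c_{i_1},\dots,c_{i_r}$, occur consecutively along $P$, and $P$ may dip into a copy only immediately before $c_{i_1}$ or immediately after $c_{i_r}$. Thus $P$ is a tail inside copy $i_1$, followed by the clique subpath, followed by a tail inside copy $i_r$, where any of these three parts may be empty. If $r=0$ the path lies in one copy and is conflict-free by induction. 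If $r\ge 1$, I would focus on the two extreme clique vertices: color $i_1$ appears on $P$ only at $c_{i_1}$ and possibly inside the copy-$i_r$ tail, the latter occurring only when $i_r\to i_1$ in $T$ (since the copy-$i_1$ tail avoids color $i_1$ and all clique colors are distinct); symmetrically color $i_r$ can be duplicated only when $i_1\to i_r$. Because $T$ is a tournament, exactly one of the arcs $i_1\to i_r$, $i_r\to i_1$ is present, so one of the colors $i_1,i_r$ occurs exactly once on $P$, proving $P$ conflict-free. The main obstacle is precisely this case of a path crossing the clique into two different copies; the idea that resolves it is that the only conflicts that can arise are between the two \emph{endpoint} copies and each other's clique colors, which is exactly the ``no $2$-cycle'' property of a tournament, and a regular tournament on $m$ vertices fits the color budget exactly.
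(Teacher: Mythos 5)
Your proof is correct, and its skeleton matches the paper's: the lower bound comes from the contained clique $K_{2^{k+1}-1}$ via subgraph monotonicity, and the upper bound is by induction, coloring the clique with $2^{k+1}-1$ distinct colors and each copy of $H_{k-1}$ with a palette of $2^k-1$ of those same clique colors. The difference lies in how the palettes are chosen and how conflict-freeness is verified. The paper assigns copy $i$ the colors $i+1,\dots,i+2^k-1$ taken modulo $2^{k+1}-1$ --- which is precisely the out-neighborhood of $i$ in the rotational tournament, i.e., the special case of your construction --- and then settles the two-copy case by the arithmetic dichotomy $i+2^k-1<j$ versus $i+2^k-1\ge j$, which determines which of the colors $i,j$ lies outside the other copy's palette. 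You abstract this: any regular tournament works, regularity makes the palette sizes fit the budget $2^k-1$ exactly, and the paper's dichotomy becomes the statement that a tournament has exactly one arc between $i_1$ and $i_r$. You also make explicit the structural fact that the paper uses only implicitly in its four-case analysis: since each copy hangs on the clique by a single edge, a path can meet at most two copies, one at each end, with its clique vertices forming a contiguous middle segment. What the paper's concrete choice buys is brevity and a self-contained modular computation; what your abstraction buys is transparency about the two properties actually needed (regular out-degree for the counting, the no-2-cycle property for conflict-freeness), at the mild cost of invoking a regular tournament --- which you instantiate with the same rotational example the paper hard-codes anyway.
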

\begin{proof}
By induction on $k$. For $k=0$, $\chicf(H_0) = 1$. For $k>0$, we have
$H_k \containssubgraph K_{2^{k+1}-1}$, therefore, 
$\chicf(H_k) \geq 2^{k+1}-1$.

In order to prove that $\chicf(H_k) \leq 2^{k+1}-1$,
it is enough to describe a conflict-free coloring of $H_k$ with
$2^{k+1}-1$ colors, given a conflict-free coloring of $H_{k-1}$
with $2^k-1$ colors:
We color the vertices of the clique $K_{2^{k+1}-1}$
with colors $1, 2, \ldots , 2^{k+1}-1$ such that the $i$-th vertex is colored
with color $i$.
Consider these colors mod $2^{k+1}-1$, e. g. color $2^{k+1}$ is identical to
color 1. 
Recall that the $i$-th copy of $H_{k-1}$ has a vertex connected to the $i$-th
vertex 
of $K_{2^{k+1}-1}$, and by induction we know that 
$\chicf(H_{k-1})=2^{k}-1$.
Color the $i$-th copy of $H_{k-1}$, with colors $i+1, i+2, \ldots , i+2^{k}-1$.

We claim that this vertex coloring of $H_k$ is conflict-free.
If a path is completely contained in a copy of
$H_{k-1}$, then it is conflict-free by induction.
If a path is completely contained in the clique $K_{2^{k+1}-1}$, then it is
also conflict-free, because all colors in the clique part are
different. If a path contains vertices from a single 
copy of $H_{k-1}$, say, the $i$-th copy, and the clique, then the $i$-th 
vertex of the
clique is on the path and uniquely colored.
The last case is when a path contains vertices from exactly two
copies of $H_{k-1}$. Suppose that these are the $i$-th and $j$-th copies of 
 $H_{k-1}$, $1\le i<j\le 2^{k+1}-1$. 
If $i+2^k-1<j$, then color $j$ is unique in the path; indeed, 
the $i$-th copy of $H_{k-1}$ is colored with colors $i+1, \ldots i+2^k-1$, 
and the $j$-th copy of $H_{k-1}$ is colored with colors $j+1, \ldots j+2^k-1$, 
while color $j$ appears only once in $K_{2^{k+1}-1}$. 
Similarly, if $i+2^k-1\ge j$, then color $i$ is unique in the path.
\end{proof}

\begin{lemma}\label{lemma:hedgehogumub}
$\chium(H_k) \leq 2^{k+2}-k-3$.
\end{lemma}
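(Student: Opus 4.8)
The plan is to prove the stronger recursive statement $\chium(H_k) \le \chium(H_{k-1}) + (2^{k+1}-1)$ and then unfold it. First I would describe an explicit unique-maximum coloring of $H_k$ built from an inductively given unique-maximum coloring of $H_{k-1}$ that uses $\chium(H_{k-1})$ colors. I would reserve a top block of $2^{k+1}-1$ fresh colors, namely $\chium(H_{k-1})+1, \ldots, \chium(H_{k-1})+2^{k+1}-1$, and assign them one each to the $2^{k+1}-1$ vertices of the central clique $K_{2^{k+1}-1}$, so that all clique colors are pairwise distinct and strictly larger than every other color used. Then I would color each of the $2^{k+1}-1$ copies of $H_{k-1}$ with the same palette $\{1,\ldots,\chium(H_{k-1})\}$, reusing the inductive coloring verbatim in every copy.

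Next I would verify that this coloring is unique-maximum. The key structural remark is that deleting the clique from $H_k$ leaves exactly the disjoint copies of $H_{k-1}$, since each copy is attached to the clique through a single edge; hence any path avoiding the clique lies entirely inside one copy and is handled by the induction hypothesis. For a path $p$ that does meet the clique, the maximum color on $p$ must be one of the reserved clique colors, because those strictly dominate all colors used inside the copies; and since the clique vertices receive pairwise distinct colors while no copy vertex can reach such a large color, this maximum is realized at exactly one vertex of $p$. Together these two cases cover every path, establishing that the coloring is unique-maximum and yielding the recursive inequality.

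Finally I would solve the recurrence with base case $\chium(H_0)=1$: summing the fresh colors added at each level gives $\chium(H_k) \le 1 + \sum_{j=1}^{k}(2^{j+1}-1) = 1 + (2^{k+2}-4) - k = 2^{k+2}-k-3$, which is exactly the claimed bound, so the construction is tight for this recursion. I do not expect a serious obstacle: the only point needing care is the structural claim that removing the clique disconnects the copies, guaranteeing that a clique-free path stays inside one copy and that, once the clique colors sit on top, the global maximum along any clique-meeting path is automatically unique. The evaluation of the geometric sum is routine.
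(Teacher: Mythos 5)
Your proposal is correct and follows essentially the same approach as the paper: reserve the top $2^{k+1}-1$ colors for the clique, reuse one inductive coloring of $H_{k-1}$ verbatim on every copy, and split paths into those confined to a copy and those meeting the clique. The only difference is cosmetic --- you phrase the induction as a recurrence $\chium(H_k)\le\chium(H_{k-1})+2^{k+1}-1$ and unfold it at the end, while the paper carries the closed-form bound through the induction directly.
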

\begin{proof}
By induction. For $k=0$, $\chium(H_0)=1$. 
For $k>0$, 
in order to color $H_k$
use the $2^{k+1}-1$ different highest 
colors for the clique part.
By the inductive hypothesis $\chium(H_{k-1}) \leq 2^{k+1}-k-2$.
For each copy of $H_{k-1}$, 
use the same coloring with the $2^{k+1}-k-2$ lowest
colors. 
This coloring of $H_k$ is unique maximum. Indeed, if a path is
contained in a copy of $H_{k-1}$ then it is unique maximum
by induction, and if it contains a vertex
in the clique part, then it is also unique maximum.
The total number of colors is $2^{k+2}-k-3$.
\end{proof}

\begin{lemma}\label{lemma:cliqueplusgraph}
If $Y$ is a graph that consists of a $K_\ell$ and $\ell$
isomorphic copies of a connected graph $X$, such that 
for $1\le i\le\ell$ a vertex of 
it $i$-th copy is connected to the $i$-th vertex of $K_\ell$ by an edge.
Then we have 
$\chium(Y) \geq \ell - 1 + \chium(X)$
\end{lemma}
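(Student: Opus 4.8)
The plan is to induct on $\ell$. Fix an optimal unique-maximum coloring $C$ of $Y$ and let $k=\chium(Y)$ be the number of colors it uses, say colors $1,\dots,k$. The one structural fact I will lean on throughout is that in any unique-maximum coloring of a \emph{connected} graph the largest color occurs exactly once: if two vertices carried the top color, any path joining them (which exists by connectivity) would display that color twice, contradicting Definition~\ref{def:ordcol}. Since $Y$ is connected, let $z$ be the unique vertex with $C(z)=k$. Write $u_1,\dots,u_\ell$ for the vertices of the $K_\ell$, let $X_i$ denote the $i$-th copy of $X$, and let $w_i\in X_i$ be the vertex joined to $u_i$.

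For the inductive step (so $\ell\ge 2$), observe that $z$ lies in exactly one of the sets $\{u_j\}\cup V(X_j)$; fix that index $j$. I then delete the clique vertex $u_j$. Because $X_j$ is attached to the rest of $Y$ only through the single edge $w_ju_j$, the graph $Y-u_j$ splits into two components: a detached copy $X_j\cong X$, and a graph $Y'$ that is precisely a $K_{\ell-1}$ (on $\{u_i:i\ne j\}$) together with its $\ell-1$ attached copies of $X$. Crucially, $z$ is pushed entirely onto the $X_j$ side — either $z=u_j$ was just removed, or $z\in V(X_j)$ sits in the detached component — so the color $k$ does not appear anywhere in $Y'$. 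Hence the restriction $C|_{Y'}$, which is a unique-maximum coloring of $Y'$ by proposition~\ref{prop:subgraphmonotonicity}, uses at most $k-1$ colors. Applying the induction hypothesis to $Y'$ gives $\chium(Y')\ge(\ell-1)-1+\chium(X)$, and since $C|_{Y'}$ is a valid coloring of $Y'$ its color count is at least $\chium(Y')$. Combining, $k-1\ge \ell-2+\chium(X)$, i.e.\ $\chium(Y)=k\ge\ell-1+\chium(X)$, as required.

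The base case $\ell=1$ is immediate: $Y$ is then $X$ with one pendant vertex, so $X\subgraph Y$ and proposition~\ref{prop:subgraphmonotonicity} gives $\chium(Y)\ge\chium(X)=\ell-1+\chium(X)$. I expect the only delicate point to be the choice of which vertex to delete. Removing the \emph{globally} top-colored vertex $z$ directly is tempting but fails when $z$ lies inside a copy, since $Y-z$ is then no longer of the required clique-plus-copies form; dually, deleting an arbitrary clique vertex need not strip the color $k$ from the surviving side, so no color is saved and the bound comes out one short. The fix that makes the induction go through is exactly to delete the clique vertex $u_j$ \emph{matched to the copy (or clique vertex) carrying the maximum}: this simultaneously preserves the smaller clique-plus-copies structure $Y'$ and guarantees that $Y'$ omits the top color, which is what converts the inductive bound $\ell-2+\chium(X)$ into $\ell-1+\chium(X)$.
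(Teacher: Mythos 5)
Your proof is correct and takes essentially the same route as the paper's: induction on $\ell$, the observation that in a unique-maximum coloring of a connected graph the maximum color occurs exactly once, and reduction to the $K_{\ell-1}$-plus-copies graph $Y'$ with one color saved. The only (harmless) difference is that you delete the clique vertex $u_j$ of the branch containing the top-colored vertex, whereas the paper deletes the top-colored vertex $v$ itself and notes that $Y-v$ still contains $Y'$ as a subgraph; both yield the same inequality $\chium(Y)\ge 1+\chium(Y')$.
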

\begin{proof}
By induction on $\ell$. For $\ell=1$, we have that $\chium(Y) \geq
\chium(X)$, because $Y \containssubgraph X$. For the inductive
step, for $\ell > 1$, if $Y$ consists of  a $K_{l}$ and $\ell$
copies of $X$, then $Y$ is connected, and thus contains a vertex
$v$ with unique color. But then, $Y-v \containssubgraph Y'$, where
$Y'$ is a graph that consists of a $K_{\ell-1}$ and ${\ell-1}$
isomorphic copies of a $X$, each connected to a different vertex of
$K_{\ell-1}$, 
and thus $\chium(Y) = 1 + \chium(Y') 
\geq \ell-1+\chium(X)$.
\end{proof}

\begin{lemma}\label{lemma:hedgehogumlb}
$\chium(H_k) \geq 2^{k+2}-2k-3$.
\end{lemma}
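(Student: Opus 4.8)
The plan is to prove the bound by induction on $k$, with lemma~\ref{lemma:cliqueplusgraph} doing essentially all of the work in the inductive step. The key observation is that the graph $H_k$ is, by its very definition, an instance of the configuration described in lemma~\ref{lemma:cliqueplusgraph}: it consists of a clique $K_\ell$ with $\ell = 2^{k+1}-1$, together with $\ell$ isomorphic copies of the connected graph $X = H_{k-1}$, each attached by a single edge to a distinct vertex of the clique. Therefore that lemma applies directly and yields the recurrence
\[
\chium(H_k) \geq \ell - 1 + \chium(H_{k-1}) = 2^{k+1} - 2 + \chium(H_{k-1}).
\]

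For the base case $k=0$, I would simply note that $H_0$ is a single vertex, so $\chium(H_0) = 1$, which matches $2^{0+2} - 2\cdot 0 - 3 = 1$. For the inductive step I would assume the bound for $k-1$, that is $\chium(H_{k-1}) \geq 2^{(k-1)+2} - 2(k-1) - 3 = 2^{k+1} - 2k - 1$, and substitute it into the recurrence above. The arithmetic then collapses to $\chium(H_k) \geq (2^{k+1}-2) + (2^{k+1}-2k-1) = 2^{k+2} - 2k - 3$, exactly as claimed.

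There is no serious obstacle here once lemma~\ref{lemma:cliqueplusgraph} is available; the only thing to be careful about is the bookkeeping of the additive constants. It is worth noting that the lower recurrence adds $2^{k+1}-2$ new colors per level, whereas the upper-bound construction of lemma~\ref{lemma:hedgehogumub} adds $2^{k+1}-1$ colors per level, a difference of exactly one per level. Accumulated over the $k$ recursive levels, this accounts precisely for the gap of $k$ between the lower bound $2^{k+2}-2k-3$ and the upper bound $2^{k+2}-k-3$, and I would verify that each invocation of the lemma indeed loses only this single additive unit so that the two bounds line up with the expected linear-in-$k$ discrepancy.
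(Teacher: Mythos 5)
Your proof is correct and is essentially identical to the paper's: both proceed by induction on $k$, apply lemma~\ref{lemma:cliqueplusgraph} with $\ell = 2^{k+1}-1$ and $X = H_{k-1}$ to get $\chium(H_k) \geq 2^{k+1}-2+\chium(H_{k-1})$, and then substitute the inductive hypothesis to obtain $2^{k+2}-2k-3$. Your closing observation about the per-level loss of one color relative to the upper bound of lemma~\ref{lemma:hedgehogumub} is accurate but not needed for the argument.
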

\begin{proof}
By induction. For $k=0$, $\chium(H_0)=1$. 
For $k >0$, by the inductive hypothesis and lemma~\ref{lemma:cliqueplusgraph},
$\chium(H_k) \geq 2^{k+1}-1-1+2^{k+1}-2(k-1)-3 = 2^{k+2}-2k-3$
\end{proof}

\begin{theorem}
We have $\lim_{k\to\infty}(\chium(H_k)/\chicf(H_k))=2$.
\end{theorem}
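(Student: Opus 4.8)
The plan is to sandwich the ratio $\chium(H_k)/\chicf(H_k)$ between two explicit sequences that both converge to $2$, and then to invoke the squeeze theorem. All the substantive work has already been done in the three preceding lemmas, so what remains is a routine limit computation; I will carry it out in three short steps.

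First, I would record the exact value of the denominator. By Lemma~\ref{lemma:hedgehogcf} we have $\chicf(H_k) = 2^{k+1}-1$, so the conflict-free chromatic number is known precisely and grows like $2^{k+1}$. Second, I would bound the numerator from both sides using the two unique-maximum estimates: the upper bound $\chium(H_k) \leq 2^{k+2}-k-3$ from Lemma~\ref{lemma:hedgehogumub} and the lower bound $\chium(H_k) \geq 2^{k+2}-2k-3$ from Lemma~\ref{lemma:hedgehogumlb}. Dividing through by $\chicf(H_k)$ yields
\[
\frac{2^{k+2}-2k-3}{2^{k+1}-1}
\;\leq\;
\frac{\chium(H_k)}{\chicf(H_k)}
\;\leq\;
\frac{2^{k+2}-k-3}{2^{k+1}-1}.
\]

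Third, I would take the limit of the two outer expressions. Dividing numerator and denominator of each by $2^{k+1}$, the left bound becomes $\bigl(2-(2k+3)/2^{k+1}\bigr)/\bigl(1-1/2^{k+1}\bigr)$ and the right bound becomes $\bigl(2-(k+3)/2^{k+1}\bigr)/\bigl(1-1/2^{k+1}\bigr)$. The key observation is that the dominant term of each numerator is $2^{k+2}=2\cdot 2^{k+1}$, while the correction terms are only linear in $k$; since $k/2^{k+1}\to 0$ and $1/2^{k+1}\to 0$ as $k\to\infty$, both outer expressions converge to $2/1 = 2$.

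Since the lower and upper bounds share the common limit $2$, the squeeze theorem forces $\chium(H_k)/\chicf(H_k)\to 2$, which is exactly the claim. I do not anticipate any genuine obstacle here: the only point that even requires comment is the verification that the linear-in-$k$ corrections are asymptotically negligible against the exponential growth of both chromatic numbers, and this is immediate. In effect, this theorem is simply the asymptotic packaging of Lemmas~\ref{lemma:hedgehogcf}, \ref{lemma:hedgehogumub}, and~\ref{lemma:hedgehogumlb}, and provides the promised infinite family witnessing that the ratio $\chium/\chicf$ can approach $2$.
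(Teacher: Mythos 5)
Your proof is correct and is essentially identical to the paper's own argument: both sandwich the ratio using Lemmas~\ref{lemma:hedgehogcf}, \ref{lemma:hedgehogumub}, and~\ref{lemma:hedgehogumlb} and conclude by taking the limit. Your version merely spells out the elementary limit computation that the paper leaves implicit.
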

\begin{proof} From lemmas \ref{lemma:hedgehogcf}, \ref{lemma:hedgehogumub},
\ref{lemma:hedgehogumlb}, we have 
\[
  \frac{2^{k+2}-2k-3}{2^{k+1}-1} \leq 
  \frac{\chium(H_k)}{\chicf(H_k)} \leq
  \frac{2^{k+2}-k-3}{2^{k+1}-1}
\]
which implies that the ratio tends to $2$.
\end{proof}

%------------------------------------------------

\section{The two chromatic numbers of a square grid}
\label{sec:gridcfum}

In this section, we define two games on graphs, each played by two
players. The first game characterizes completely the 
unique-maximum chromatic number of the graph.
The second game is related to the conflict-free chromatic number
of the graph.
We use the two games to prove that the conflict-free chromatic
number of the square grid is a function of the unique-maximum
chromatic number of the square grid. This is useful because it
allows to translate existing lower bounds on the unique-maximum chromatic
number of the square grid to lower bounds on the 
corresponding conflict-free
chromatic number.
For any graph $G$, and subset of its vertices $V'\subset V(G)$,
let $G[V']$ denote the subgraph of $G$ induced by $V'$.

The first game (which is played on a graph $G$ by two players) 
is the \emph{connected component game}: 

\begin{lstlisting}
$i \gets 0$; $G^0 \gets G$
while $V(G^i) \neq \emptyset$:
   increment $i$ by $1$
   Player 1 chooses a connected component $S^i$ of $G^{i-1}$ 
   Player 2 chooses a vertex $v_i \in S^i$
   $G^i \gets G^{i-1}[S^i \setminus \{v_i\}]$ 
\end{lstlisting}

The game is finite, because if $G^i$ is not empty, then $G^{i+1}$
is a strict subgraph of $G^i$.
The result of the game is its length, that is, the final value of $i$.
Player 1 tries to make the final value of $i$ as large as possible
and thus is the maximizer player. 
Player 2 tries to make the final value of $i$ as small as possible
and thus is the minimizer player.
If both players play optimally, then the result is the
\emph{value} of the connected component game on graph $G$, 
which is denoted by $\vcs(G)$.

\begin{proposition}
In the connected component game, there is a strategy for player 2
(the minimizer), 
so that the result of the game is at most $\chium(G)$, i.e., 
$\vcs(G) \leq \chium(G)$.
\end{proposition}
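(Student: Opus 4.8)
The plan is to let player 2 (the minimizer) be guided by an optimal unique-maximum coloring of $G$. Fix such a coloring $C \colon V(G) \to \{1,\dots,k\}$ with $k = \chium(G)$, and let player 2 adopt the following rule: whenever player 1 presents a connected component $S^i$ of $G^{i-1}$, player 2 deletes the vertex of $S^i$ carrying the largest color under $C$. I would show that, regardless of player 1's choices, this forces the play to terminate within $k$ rounds.

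The key preliminary observation I would establish is that in every connected induced subgraph $S$ of $G$, the maximum color that $C$ assigns to a vertex of $S$ is attained by a \emph{unique} vertex. Indeed, if two distinct vertices of $S$ shared this maximum color, the path joining them inside $S$ would be a path of $G$ as well, since all the subgraphs occurring in the game are induced subgraphs of $G$; on that path the maximum color would occur at both endpoints, contradicting the unique-maximum property of $C$. First I would record the routine fact that every $G^i$ and every component $S^i$ arising in the play is an induced subgraph of $G$ (by induction, using $G^i = G^{i-1}[S^i \setminus \{v_i\}]$), so that this observation applies at each step and, in particular, makes player 2's move well defined.

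Then I would track the quantity $M_i = \max_{v \in S^i} C(v)$. By the observation, the vertex $v_i$ player 2 deletes is the unique vertex attaining $M_i$, so every remaining vertex of $G^i = S^i - v_i$ has color strictly below $M_i$. Since $S^{i+1}$ is a component of $G^i$, every vertex of $S^{i+1}$ also has color below $M_i$, whence $M_{i+1} < M_i$. Thus $M_1 > M_2 > \cdots$ is a strictly decreasing sequence of elements of $\{1,\dots,k\}$, so the play lasts at most $k$ rounds. Consequently the result of the game under this strategy is at most $k = \chium(G)$, giving $\vcs(G) \leq \chium(G)$.

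I expect the only genuinely delicate point to be the uniqueness observation, specifically the need to confirm that the components generated during the game really are induced subgraphs of $G$, so that a path inside a component is a path of $G$ and the unique-maximum hypothesis can be invoked. Once that is in place, the strict-decrease bookkeeping and the final counting argument are immediate.
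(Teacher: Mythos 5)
Your proof is correct and takes essentially the same approach as the paper: player 2 is guided by a fixed optimal unique-maximum coloring and deletes the vertex carrying the (necessarily unique, by connectedness plus the unique-maximum property) maximum color in the chosen component. The only difference is bookkeeping---the paper argues by induction on $\chium(G)$, noting that the restricted coloring of $G^1$ uses at most $k-1$ colors, whereas you track the strictly decreasing maximum color $M_i$ directly; both give the bound of $k$ rounds and hence $\vcs(G) \leq \chium(G)$.
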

\begin{proof}
By induction on $\chium(G)$: If $\chium(G) = 0$, i.e., the graph is
empty, the value of the game is 0. If $\chium(G) = k > 0$,
then in the first turn some connected component $S_1$ is chosen by player 1. 
Then, the strategy of player 2 is to take an optimal 
unique-maximum coloring $C$ of
$G$ and choose a vertex $v_1$ in $S^1$ that has a unique color in
$S^1$. Then, 
$G^1 = G[S^1\setminus \{v_1\}] \propersubgraph G^0$ 
and the restriction of
$C$ to $S^1 \setminus\{v_1\}$ is a unique-maximum coloring 
of $G^1$
that is using at most $k-1$
colors. Thus, $\chium(G^1) \leq k-1$, and by the inductive hypothesis
player 2 has a strategy so that the result of the game on $G^1$ is at most
$k-1$.
Therefore, player 2 has a strategy so that the result of the game
on $G^0=G$ is at most $1+k-1 = k$.
\end{proof}

\begin{lemma}\label{lem:gminuschium}
For every $v \in V(G)$, $\chium(G-v) \geq \chium(G) - 1$
\end{lemma}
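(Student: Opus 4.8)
The plan is to prove the equivalent upper bound $\chium(G) \le \chium(G-v) + 1$ by taking an optimal coloring of $G-v$ and extending it across the single missing vertex. First I would fix an optimal unique-maximum coloring $C$ of $G-v$, so that $C$ uses the colors $\{1,\dots,k\}$ where $k = \chium(G-v)$. I then define a coloring $C'$ of all of $G$ by keeping $C$ on the old vertices and giving $v$ a brand-new top color: $C'(u) = C(u)$ for every $u \in V(G-v)$, and $C'(v) = k+1$.

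The key step is to verify that $C'$ is a unique-maximum coloring of $G$, which I would do by splitting the paths of $G$ into two cases. If a path $p$ avoids $v$, then $p$ is already a path of $G-v$ and uses only colors in $\{1,\dots,k\}$, so its maximum color occurs exactly once because $C$ is unique-maximum on $G-v$. If instead $p$ passes through $v$, then since $p$ is a simple path it contains $v$ exactly once, and $C'(v) = k+1$ is strictly larger than every other color appearing on $p$; hence the maximum color of $p$ is attained uniquely, at $v$. In both cases the unique-maximum condition holds.

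Having established that $C'$ is a valid unique-maximum coloring of $G$ with $k+1$ colors, I conclude $\chium(G) \le k+1 = \chium(G-v)+1$, which rearranges to the claimed inequality $\chium(G-v) \ge \chium(G)-1$.

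There is really no hard part here: the only point that needs care is the observation that a (simple) path meets $v$ at most once, which is exactly what makes the fresh top color unique on every path through $v$. This is the same ``promote one vertex to a new maximum'' idea that underlies the separator shortcut discussed earlier, and it is entirely in the monotonicity spirit of Proposition~\ref{prop:subgraphmonotonicity}.
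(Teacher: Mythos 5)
Your proposal is correct and uses exactly the same idea as the paper: extend an optimal unique-maximum coloring of $G-v$ by assigning $v$ a fresh top color, so that every path through $v$ has its maximum attained uniquely at $v$. The only difference is presentational — the paper phrases the argument as a proof by contradiction, while you state the equivalent bound $\chium(G) \le \chium(G-v)+1$ directly and verify the two path cases explicitly, which is arguably cleaner.
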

\begin{proof}
Assume for the sake of contradiction that there exists a $v \in
V(G)$ for which $\chium(G-v) < \chium(G) - 1$. Then an optimal
coloring of $G-v$ can be extended to a coloring of $G$, where $v$
has a new unique maximum color. Therefore there is a coloring of $G$ that
uses less than $\chium(G) - 1 + 1 = \chium(G)$ colors; a
contradiction.
\end{proof}

\begin{proposition}\label{prop:cscfplayer1}
In the connected component game, there is a strategy for player 1
(the maximizer), so that the result of the game is at least
$\chium(G)$, i.e., $\vcs(G) \geq \chium(G)$.
\end{proposition}
\begin{proof}
By induction on $\chium(G)$: If $\chium(G) = 0$, i.e., the graph is
empty, the result of the game is zero. If $\chium(G) = k > 0$,
the strategy of player 1 is to choose a connected component $S^1$ such that
$\chium(G[S^1]) = k$. 
%there is always such a connected component,
%because otherwise one could color all connected components of $G$
%with less than $k$ colors (a contradiction to $\chium(G)=k$). 
For
every choice of $v_1$ by Player 2, by lemma~\ref{lem:gminuschium},
$\chium(G^1) \geq k-1$, and thus, by the inductive hypothesis player
1 has a strategy so that the result of the game on $G^1$ is at
least $k-1$.
Therefore, the result of the game on $G^0=G$ is at least $1+k-1 =
k$.
\end{proof}

\begin{corollary}
For every graph, $\vcs(G) = \chium(G)$.
\end{corollary}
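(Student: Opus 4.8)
The plan is to observe that this corollary requires no new work beyond combining the two propositions immediately preceding it, which together sandwich $\vcs(G)$ between $\chium(G)$ from both sides. First I would invoke the proposition establishing player 2's (the minimizer's) strategy, which guarantees $\vcs(G) \leq \chium(G)$: by playing according to an optimal unique-maximum coloring and always removing a uniquely-colored vertex from the component chosen by player 1, the minimizer forces the game to terminate in at most $\chium(G)$ rounds.

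Next I would invoke Proposition~\ref{prop:cscfplayer1}, which supplies player 1's (the maximizer's) strategy and yields the reverse inequality $\vcs(G) \geq \chium(G)$: by always selecting a connected component whose induced subgraph retains the full unique-maximum chromatic number, and then relying on Lemma~\ref{lem:gminuschium} to ensure that deleting any single vertex drops this parameter by at most one, the maximizer forces at least $\chium(G)$ rounds. Chaining the two inequalities $\vcs(G) \leq \chium(G)$ and $\vcs(G) \geq \chium(G)$ gives $\vcs(G) = \chium(G)$ for every graph $G$.

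There is essentially no obstacle here: the content of the result lives entirely in the two preceding propositions, and the corollary is just their conjunction. If I wanted to be fully self-contained I might briefly note that a two-player game of this finite, alternating, maximizer-versus-minimizer type has a well-defined value under optimal play (so that the two one-sided bounds genuinely refer to the same quantity $\vcs(G)$), but since the game's finiteness and its value were already fixed in the definition of $\vcs(G)$, even this remark is unnecessary. Hence the proof reduces to citing the two propositions and concluding equality.
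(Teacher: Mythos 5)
Your proposal is correct and matches the paper exactly: the corollary is stated with no separate proof precisely because it is the immediate conjunction of the two preceding propositions, $\vcs(G) \leq \chium(G)$ from the minimizer's strategy and $\vcs(G) \geq \chium(G)$ from Proposition~\ref{prop:cscfplayer1}. Your additional remark about the game value being well defined is a reasonable nicety but, as you note, already settled by the definition of $\vcs(G)$.
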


The second game 
(also played on a graph $G$ by two players) 
is the \emph{path game}:

\begin{lstlisting}
$i \gets 0$; $G^0 \gets G$
while $V(G^i) \neq \emptyset$:
   increment $i$ by $1$
   Player 1 chooses the set of vertices $S^i$ of a path of $G^{i-1}$
   Player 2 chooses a vertex $v_i \in S^i$
   $G^i \gets G^{i-1}[S^i \setminus \{v_i\}]$ 
\end{lstlisting}

The only difference with the connected component game is that in the path
game the vertex set $S^i$ that maximizer chooses is the vertex set
of a path of the graph $G^{i-1}$.
If both players play optimally, then the result is the
\emph{value} of the path game on graph $G$, which is denoted by
$\vp(G)$.

\begin{proposition}\label{prop:vpcf}
In the path game, there is a strategy for player 2 (the minimizer), 
so that the result of the game is at most $\chicf(G)$, i.e., 
$\vp(G) \leq \chicf(G)$.
\end{proposition}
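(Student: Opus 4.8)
The plan is to mirror the argument already used for the connected component game, replacing the unique-maximum coloring by a conflict-free one and inducting on $\chicf(G)$ in place of $\chium(G)$. Concretely, I would prove by induction on $k$ the statement: for every graph $G$ with $\chicf(G)\le k$, player 2 has a strategy guaranteeing that the path game on $G$ lasts at most $k$ rounds. The base case is $k=0$, where $G$ is the empty graph and the game trivially has value $0$.

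For the inductive step, suppose $\chicf(G)=k>0$ and fix an optimal conflict-free coloring $C$ of $G$ using the colors $\{1,\dots,k\}$. When player 1 opens by choosing the vertex set $S^1$ of a path of $G^0=G$, the definition of a conflict-free coloring (Definition~\ref{def:cfcol}) guarantees that some color occurs exactly once among the vertices of $S^1$; player 2's strategy is to pick a vertex $v_1\in S^1$ carrying such a uniquely occurring color. The position then becomes $G^1=G[S^1\setminus\{v_1\}]$.

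The crux is to show that $\chicf(G^1)\le k-1$, so that the inductive hypothesis applies to $G^1$ and player 2 can finish within $k-1$ further rounds, for a total of at most $1+(k-1)=k$. To this end I would examine the restriction of $C$ to $S^1\setminus\{v_1\}$. Two facts are needed. First, this restriction uses at most $k-1$ colors: the color of $v_1$ occurred exactly once on $S^1$, hence does not appear at all on $S^1\setminus\{v_1\}$, so the palette shrinks by at least one. Second, the restriction is conflict-free on $G^1$: since $G^1$ is an induced subgraph of $G$, every path of $G^1$ is also a path of $G$, and $C$ was conflict-free on all paths of $G$, so each such path still contains a uniquely occurring color. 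Together these give a conflict-free coloring of $G^1$ with at most $k-1$ colors, establishing $\chicf(G^1)\le k-1$.

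I expect the only real subtlety to be this last step, and in particular the observation that passing to the induced subgraph $G^1$ creates no genuinely new paths — every path of $G[S^1\setminus\{v_1\}]$ is already a path of $G$ — which is exactly what lets the single fixed coloring $C$ continue to certify conflict-freeness after the deletion. The reduction in the number of colors is immediate from the uniqueness of $v_1$'s color on $S^1$. With $\chicf(G^1)\le k-1$ in hand, the induction closes and yields $\vp(G)\le\chicf(G)$.
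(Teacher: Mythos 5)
Your proof is correct and follows essentially the same route as the paper's: induction on $\chicf(G)$, with player 2 playing a uniquely occurring color on $S^1$ from a fixed optimal conflict-free coloring, and the restriction argument showing $\chicf(G^1)\le k-1$. The extra details you supply (the palette shrinking by one, and paths of the induced subgraph $G^1$ being paths of $G$) are exactly the facts the paper's more terse proof implicitly relies on.
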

\begin{proof}
By induction on $\chicf(G)$: If $\chicf(G) = 0$, i.e., the graph is
empty, the value of the game is 0. If $\chicf(G) = k > 0$,
then in the first turn some vertex set $S^1$ of a path of $G$
is chosen by player 1. 
Then, the strategy of player 2 is to find an optimal 
conflict-free coloring $C$ of
$G$ and choose a vertex $v_1$ in $S^1$ that has a unique color in
$S^1$. Then, 
$G^1 = G[S^1\setminus \{v_1\}] \propersubgraph G^0$ 
and the restriction of
$C$ to $S^1 \setminus\{v_1\}$ is a conflict-free coloring 
of $G^1$
that is using at most $k-1$
colors. Thus, $\chicf(G^1) \leq k-1$, and by the inductive hypothesis
player 2 has a strategy so that the result of the game is at most
$k-1$.
Therefore, player 2 has a strategy so that the result of the game is at most
$1+k-1 = k$.
\end{proof}

A proposition analogous to \ref{prop:cscfplayer1} for the
path game is not true. For example, for the complete binary tree
of four levels (with 15 vertices, 8 of which are leaves), $B_4$, 
it is not difficult to check that $\vp(B_4) = \vp(P_7) = 3$, but
$\chicf(B_4)=4$.

Now, we are going to concentrate on the square grid graph. Assume
that $m$ is even. We
intend to translate a strategy of player 1 (the maximizer) on the 
connected component game for graph $G_{m/2}$ to a strategy for
player on the path game for graph $G_m$. 

%#2

Observe that for every connected graph $G$, 
there is an ordering of its vertices, $v_1, v_2, \ldots , v_n$
such that the subgraph induced by the first $k$
vertices (for every $1\le k\le n$) is also connected. 
Just pick a vertex to be $v_1$, and add the other vertices one by one 
such that the new vertex $v_i$ is connected to the graph induced by 
$v_1, \ldots , v_{i-1}$. This is possible, since $G$ itself is connected.
We call such an ordering of the vertices an \emph{always-connected ordering}.

%
%\begin{lemma}\label{lemma:alwaysconnected}
%For every connected set of vertices $S$ in $G_{m/2}$, there is an
%ordering of its vertices $v_1$, $v_2$, \ldots, $v_k$, such that
%for every $j$ with $1\leq j\leq k$ 
%the set $S_j = \{v_r \mid 1 \leq r \leq j\}$ is connected
%in $G_{m/2}$. We call this an always-connected ordering of $S$.
%\end{lemma}
%\begin{proof}
%Construct the ordering as follows. Pick any vertex in $S$
%to be the initial vertex in the ordering.
%It is always possible to extend the ordering by picking a vertex
%of $S$ which is connected to at least one vertex already picked.
%If no such vertex exists, then $S$ is not connected, which is a
%contradiction.
%\end{proof}
%

Now we decompose the vertex set of $G_m$ into groups of four vertices, 
$$Q_{x,y} = \{(2x,2y),(2x+1,2y),(2x,2y+1),(2x+1,2y+1)\}, $$ 
for $0\le x, y < m/2$,
called {\em special quadruples}, or briefly quadruples.
We denote the set of quadruples with 
$W_m = \left\{ Q_{x,y} \mid 0 \le x, y < m/2 \right\}$ and let 
$\tau(x,y) = Q_{x, y}$ be a bijection between 
vertices of $V(G_{m/2})$ and $W_m$.
Extend $\tau$ for subsets of vertices of $G_{m/2}$ in a natural way,
for any $S \issubsetof V(G_{m/2})$, 
%$\tau(S)=\left\{  \tau(x, y) \mid (x, y)\in S \right\}$.
$\tau(S)=\bigcup_{(x,y) \in S} \tau(x,y)$.
Define also a kind of inverse $\tau'$ of $\tau$ as
$\tau'(x, y) = (\floor{x/2},\floor{y/2})$ for any $0\le x, y< m$, 
and for any $S \issubsetof V(G_{m})$, 
$\tau'(S) = \left\{ \tau'(x, y) \mid (x, y)\in S \right\}$.

%#3

Let $(x,y) \in V(G_{m/2})$. We call vertices 
$(x, y+1)$, $(x, y-1)$, $(x-1, y)$, and $(x+1, y)$,
if they exist,  the {\em upper}, {\em lower}, {\em left}, and {\em right
neighbors} of $(x,y)$, respectively.
Similarly, quadruples $Q_{x, y+1}$, $Q_{x, y-1}$, $Q_{x-1, y}$, 
and $Q_{x+1, y}$
the {\em upper}, {\em lower}, {\em left}, and {\em right
neighbors} of $Q_{x,y}$, respectively.

Quadruple $Q_{x,y}$ induces four edges in $G_m$,
$\{(2x+1, 2y),(2x+1, 2y+1)\}$,
$\{(2x, 2y),(2x, 2y+1)\}$,
$\{(2x, 2y),(2x, 2y+1)\}$,
$\{(2x+1, 2y),(2x+1, 2y+1)\}$, 
we call them {\em upper}, {\em lower}, {\em left}, and {\em right} 
edges of $Q_{x,y}$.

%For every direction $d$, we define its mirror direction $\dmirror$
%as follows: 
%$\mirror{E}=W$,
%$\mirror{N}=S$,
%$\mirror{W}=E$,
%$\mirror{S}=N$,
%i.e., directions are mirrored along the 
%$E$-$W$
%and
%$N$-$S$
%axes.
By {\em direction} $d$, we mean one of the four basic directions,
\emph{up}, \emph{down}, \emph{left}, \emph{right}.
For a given set $S \issubsetof V(G_{m/2})$, we say that $v \in S$ is
\emph{open} in $S$ in direction $d$, if its neighbor in direction
$d$ is not in $S$. In this case we also say that $\tau(v)$ is open 
in $\tau(S)$ in direction $d$.

\begin{lemma}\label{lemma:pathspanned}
If $S$ induces a connected subgraph 
in $G_{m/2}$, then there is a path in $G_{m}$
whose vertex set is $\tau(S)$.
\end{lemma}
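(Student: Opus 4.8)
The plan is to prove the lemma by constructing a Hamiltonian path of the induced subgraph $G_m[\tau(S)]$; since $\tau(S)$ is exactly the union of the $\card{S}$ quadruples indexed by $S$, such a path is precisely a path of $G_m$ whose vertex set is $\tau(S)$. First I fix a spanning tree $T$ of the connected graph $G_{m/2}[S]$. If $\card{S}=1$ the single quadruple is a $4$-cycle, and any of its Hamiltonian paths works; so assume $\card{S}\ge 2$ and root $T$ at a leaf $r$. I then order the vertices as $v_1=r,v_2,\ldots,v_{\card{S}}$ so that every $v_i$ appears after its tree-parent (a BFS or DFS order from $r$); this is an always-connected ordering, and I build the path by adding one quadruple at each step.

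The building block is a local traversal of one quadruple. Each $Q_{x,y}$ is a $4$-cycle on its bottom, right, top, and left edges, so each of its Hamiltonian paths omits exactly one of the four \emph{sides}. When $Q_{v_i}$ is attached to its parent $Q_{v_j}$, the two quadruples are joined in $G_m$ by exactly two edges, both running along their common side. I orient the local traversal of $Q_{v_i}$ so that its two endpoints are the two vertices on the parent side and the omitted internal edge is precisely the parent-side edge; consequently the internal edges on the three non-parent sides are present and available to host children. The splice (detour) operation to attach $Q_{v_i}$ in direction $d$ from $Q_{v_j}$ is then: delete the internal edge of $Q_{v_j}$ on side $d$, which joins the two vertices $p,q$ of $Q_{v_j}$ facing $Q_{v_i}$, and reconnect through the two connecting edges as $p-p'-(\text{local traversal of }Q_{v_i})-q'-q$, where $p',q'$ are the matching parent-side vertices of $Q_{v_i}$. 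This keeps a single path and inserts exactly the four new vertices.

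The invariant I maintain is that, after $k$ steps, the current path is a Hamiltonian path of $\tau(\{v_1,\ldots,v_k\})$ in which each placed $Q_{v_i}$ still carries the internal edge on every side except the side facing its parent and those sides on which a child has already been spliced. The only prerequisite for the next splice is that the side-$d$ internal edge of $Q_{v_j}$ is present when its direction-$d$ neighbour $v_i$ is processed. This holds because side $d$ is not the parent side of $v_j$ (the parent is a distinct earlier neighbour, whereas the direction-$d$ neighbour is $v_i$ itself) and because no earlier child of $v_j$ occupied side $d$ (a grid vertex has a unique neighbour in each direction). Hence every splice is legal and the invariant is restored.

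The step needing the most care is the root, which is exactly why I root $T$ at a leaf: a $4$-cycle traversal offers only three usable sides, so each quadruple can host at most three children. This matches the fact that every non-root vertex has tree-degree at most $4$ and thus at most three children, while the root, having tree-degree $1$, has a single child whose side I simply leave non-omitted; an interior root of degree $4$ would demand four usable sides and break the construction. Because a detour only reroutes an internal edge, it never changes the pair of path-endpoints, which therefore remain the two omitted-side vertices of the root quadruple throughout. After $\card{S}$ steps the path is Hamiltonian on all of $\tau(S)$, which proves the lemma.
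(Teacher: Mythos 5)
Your proof is correct. Both you and the paper grow the spanning structure by the very same elementary splice: pick an already-placed quadruple whose internal edge on the side facing the new quadruple is still present, delete that edge, and reroute through the four new vertices as a detour of length $5$. The difference lies in the inductive strengthening that guarantees the needed edge is always available. The paper maintains a Hamiltonian \emph{cycle} of $\tau(S')$ with the invariant that for every $v \in S'$ and every direction $d$ in which $v$ is open (its $d$-neighbor lies outside $S'$), the cycle contains the $d$-edge of $\tau(v)$; the path of the lemma then comes from dropping one edge of the final cycle. You maintain a Hamiltonian \emph{path} and certify availability through a spanning tree: an internal edge of a placed quadruple is present unless it faces the parent or an already-spliced child. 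The cycle formulation buys symmetry --- the base quadruple keeps all four of its edges, so no quadruple plays a special role and any always-connected ordering works as-is --- whereas the path formulation forces the first quadruple to sacrifice one side, which is exactly the asymmetry your leaf-rooting trick repairs (and your remark that an interior root of tree-degree $4$ would break the construction shows the trick is genuinely needed, not cosmetic). Your tree-based invariant is weaker than the paper's openness invariant, but it is precisely strong enough: the direction from a placed vertex to a not-yet-placed tree-child can coincide neither with its parent direction nor with a previously used child direction, so the required edge is always there when you need it.
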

\begin{proof}
We prove a stronger statement: 
If $S$ induces a connected subgraph 
in $G_{m/2}$, then 
there is a cycle $C$ in $G_{m}$ whose 
vertex set is $\tau(S)$,
and if $v\in S$ is open in direction $d$ in $S$,
then $C$ contains the $d$-edge of $\tau(v)$.

The proof is by induction on $|S|=k$.
For $k=1$, $\tau(S)$ is one quadruple and we can take 
its four edges.

Suppose that the statement has been proved 
for $|S|<k$, and assume that $|S|=k$.
Consider an always-connected ordering 
$v_1$, $v_2$, \ldots, $v_k$
of $S$.
Let $S'=S\setminus v_k$.
By the induction hypothesis, 
there is a cycle $C'$ satisfying the requirements.
Vertex $v_k$ has at least one neighbor in $S'$, say, $v_k$ is the neighbor
of $v_i$ in direction $d$. 
But then, $v_i$ is open in direction $d$ in $S'$, therefore,
$C'$ contains the $d$-edge of $\tau(v_i)$.
Remove this edge from $C'$ and substitute by a path of length 5, passing through 
all four vertices of  $\tau(v_k)$. The resulting cycle, $C$, contains all 
vertices of $\tau(S)$, it contains each edge of $\tau(v_k)$, except the one in the opposite direction
to $d$, and it contains all edges of $C'$, except the $d$-edge of $\tau(v_k)$, but 
$v_k$ is not open in $S$ in direction $d$. 
This concludes the induction step, and the proof.
\end{proof}

\begin{proposition}\label{prop:strategyumtocf}
For every $m>1$, $\vp(G_m) \geq \vcs(G_{\floor{m/2}})$.
\end{proposition}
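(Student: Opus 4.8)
The plan is to have player~1 (the maximizer) in the path game on $G_m$ \emph{simulate} an optimal strategy of player~1 in the connected component game on $G_{\floor{m/2}}$, transporting every move across the quadruple correspondence $\tau$. Concretely, I would fix a strategy $\sigma$ for the maximizer in the connected component game on $G_{\floor{m/2}}$ that forces at least $\vcs(G_{\floor{m/2}})$ rounds against every opponent, and maintain throughout the following invariant: there is a current simulated position $T \subseteq V(G_{\floor{m/2}})$ such that the current position of the path game on $G_m$ contains every quadruple $\tau(u)$ with $u \in T$ intact, and contains no vertex outside $\bigcup_{u\in T}\tau(u)$ except possibly a few ``leftover'' vertices lying inside quadruples $\tau(u)$ with $u \notin T$. (For odd $m$ the quadruple machinery is applied to the even subgrid on $\{0,\dots,2\floor{m/2}-1\}^2 \subseteq V(G_m)$, so the argument covers all $m>1$.)

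First I would describe a single macro-round. When $\sigma$ selects a connected component $S$ of the simulated graph $T$, connectivity of $S$ lets Lemma~\ref{lemma:pathspanned} produce a path of $G_m$ whose vertex set is exactly $\tau(S)$; since $\tau(S)$ is present and intact in the current $G_m$ position, this is a legal path, and player~1 plays it. The real player~2 then deletes one vertex $w \in \tau(S)$, and I translate this into the simulated minimizer deleting $v = \tau'(w)$, which is legal because $w \in \tau(v)$ forces $v \in S$. After both deletions the new simulated position is the subgraph induced on $S \setminus \{v\}$, while the new $G_m$ position is $\tau(S)\setminus\{w\}$, namely $\tau(S\setminus\{v\})$ together with the three surviving vertices of the damaged quadruple $\tau(v)$. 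Note that the discarding of all other components matches exactly on the two sides, so the invariant is preserved with $T$ replaced by $S\setminus\{v\}$.

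The one genuinely subtle point—and the step I expect to be the crux—is that player~1 \emph{never needs to repair} a damaged quadruple. In the next macro-round $\sigma$ chooses a component $S'$ of the new simulated graph, so $S' \subseteq S\setminus\{v\}$; hence $\tau(S')$ is disjoint from the leftover vertices of $\tau(v)$ and is still fully intact in the current $G_m$ position. Player~1 then picks (again via Lemma~\ref{lemma:pathspanned}) a path confined to $\tau(S')$, and because the path game restricts the board to the chosen path minus one vertex, every leftover vertex is automatically discarded. Thus no garbage ever accumulates and the invariant is restored round after round, giving an exact one-to-one correspondence between the rounds of the two games.

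Finally I would assemble the bound. Starting from $T = V(G_{\floor{m/2}})$, which is connected so the first move $\tau(T)$ is a legal path in $G_m$, the simulation runs until the simulated graph is empty, which by the invariant happens precisely when the $G_m$ board is empty. Since the number of real rounds equals the number of simulated rounds, and since $\sigma$ forces at least $\vcs(G_{\floor{m/2}})$ simulated rounds against the play induced by \emph{any} real player~2, player~1 guarantees at least $\vcs(G_{\floor{m/2}})$ rounds in the path game on $G_m$. Therefore $\vp(G_m) \geq \vcs(G_{\floor{m/2}})$.
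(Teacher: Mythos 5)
Your proof is correct and takes essentially the same route as the paper's: player~1 simulates the connected-component game on $G_{\floor{m/2}}$ inside the path game on $G_m$, using Lemma~\ref{lemma:pathspanned} to realize $\tau(S)$ as a legal path, $\tau'$ to translate the real minimizer's deletion into a legal simulated deletion, and the invariant that the real board always contains the image under $\tau$ of the simulated board. One small inaccuracy: the simulated board does \emph{not} empty ``precisely when'' the real board empties (the three surviving vertices of the last damaged quadruple remain, so the real game can run a few rounds longer, and the round counts need not be equal), but this is harmless --- the invariant gives ``at least as many rounds,'' which is all the inequality $\vp(G_m) \geq \vcs(G_{\floor{m/2}})$ requires.
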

\begin{proof}
Assume, without loss of generality that $m$ is even (if not work
with graph $G_{m-1}$ instead).
In order, to prove that $\vp(G_m) \geq \vcs(G_{\floor{m/2}})$
it is enough, 
given a strategy for player {1} in the
connected set game for $G_{m/2}$,
to construct a strategy for player 1 (the maximizer) in
the path game for $G_m$, so that the result of the path game is
at least as much as the result of the connected set game.
We present the argument as if player 1, apart from the path game, 
plays in parallel a
connected set game on $G_{m/2}$ 
(for which player 1 has a given strategy to choose connected sets in
every round),
where player 1 also chooses the
moves of player 2 in the connected set game.

At round $i$ of the path game on $G_{m}$, player 1 simulates 
round $i$ of the connected set game on $G_{m/2}$. At the start of
round $i$, player 1 has a graph $G^{i-1} \issubgraphof G_{m}$ in the
path game and a graph ${\hat{G}}^{i-1} \issubgraphof G_{m/2}$ in the
connected set game. Player 1 chooses a set ${\hat{S}}^{i}$ in the
simulated connected set game from his given strategy, and then
constructs the path-spanned set $S^i = \tau({\hat{S}}^{i})$ 
(by lemma \ref{lemma:pathspanned}) and
plays it in the path game. Then player 2 chooses a vertex $v_i \in
S^i$. Player 1 computes $\hat{v}_i=\tau'(v_i)$ and simulates the
move $\hat{v}_i$ of player {2} in the connected set game. This is
a legal move for player {2} in the connected set game because
$\hat{v}_i \in {\hat{S}}^{i}$.

We just have to prove that $S^i = \tau({\hat{S}}^{i})$ is a legal
move for player {1} in the path game, i.e.,
$S^i \issubsetof V(G^{i-1})$. We also have to prove 
$S^i=\tau({\hat{S}}^{i})$ 
is spanned by a path in $G^{i-1}$ but this is always true
by lemma \ref{lemma:pathspanned}, since 
${\hat{S}}^{i}$ is a connected vertex set in ${\hat{G}}^{i-1}$. 
Since $S^i \issubsetof \tau(V({\hat{G}}^{i-1}))$,
it is enough to prove that at round $i$, 
$\tau(V({\hat{G}}^{i-1})) \issubsetof V(G^{i-1})$.
The proof is by induction on $i$. 
For $i=1$, $G^0 = G_{m}$, ${\hat{G}}^0 = G_{m/2}$, and thus
$\tau(V({\hat{G}}^{0})) = V(G^{0})$.
At the start of round $i$ with $i>1$, 
$\tau(V({\hat{G}}^{i-1})) \issubsetof V(G^{i-1})$, by the inductive hypothesis.
Then, $\tau(\hat{S}^i) = S^i$
and $\tau(\hat{S}^i \setminus \{\hat{v}_i\}) =
\tau(\hat{S}^i) \setminus \tau(\hat{v}_i) = S^i \setminus \tau(\hat{v}_i)
\issubsetof S^i \setminus \{v_i\}$, because $v_i \in \tau(\hat{v}_i)$. 
Thus, 
$\tau(V({\hat{G}}^{i-1}[\hat{S}^i \setminus \{\hat{v}_i\}])) \issubsetof 
V(G^{i-1}[S^i \setminus \{v_i\}])$, i.e.,
$\tau(V({\hat{G}}^{i})) \issubsetof V(G^{i})$.
\end{proof}

\begin{theorem}\label{theorem:halfgrid}
For every $m > 1$, $\chicf(G_m) \geq \chium(G_{\floor{m/2}})$. 
\end{theorem}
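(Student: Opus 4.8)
The plan is to obtain the statement by simply chaining together the three pieces of machinery established just before it. The two games were introduced precisely so that the unique-maximum chromatic number of the half-size grid and the conflict-free chromatic number of the full-size grid could be bracketed by game values, and the linking inequality $\vp(G_m) \geq \vcs(G_{\floor{m/2}})$ (Proposition~\ref{prop:strategyumtocf}) is the bridge between the two grids. So I expect essentially no new work: the theorem should follow by reading off a chain of (in)equalities.

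Concretely, I would start from the right-hand side and work leftward. First I would rewrite $\chium(G_{\floor{m/2}})$ as $\vcs(G_{\floor{m/2}})$ using the Corollary that $\vcs(G) = \chium(G)$ for every graph, which converts the connected component game value into the unique-maximum chromatic number. Next I would apply Proposition~\ref{prop:strategyumtocf} to get $\vcs(G_{\floor{m/2}}) \leq \vp(G_m)$; this is the step that actually relates the two different grids and relies, through Lemma~\ref{lemma:pathspanned}, on the fact that any connected vertex set in $G_{m/2}$ lifts to a path-spanned quadruple set in $G_m$. Finally I would invoke Proposition~\ref{prop:vpcf}, which gives $\vp(G_m) \leq \chicf(G_m)$ via the minimizer's strategy in the path game. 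Concatenating these yields
\[
  \chium(G_{\floor{m/2}}) = \vcs(G_{\floor{m/2}}) \leq \vp(G_m) \leq \chicf(G_m),
\]
which is exactly the desired bound.

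There is no genuine obstacle at this stage, since the hard analysis has already been carried out in the preceding propositions. The only thing I would double-check is that the directions of the two inequalities compose correctly: Proposition~\ref{prop:vpcf} bounds $\vp$ from above by $\chicf$ (a minimizer guarantee), while Proposition~\ref{prop:strategyumtocf} bounds $\vp(G_m)$ from below by $\vcs(G_{\floor{m/2}})$ (a maximizer guarantee), and these two sandwich $\vp(G_m)$ in the orientation needed to push $\chium$ of the half grid below $\chicf$ of the full grid. I would also note explicitly that the hypothesis $m>1$ is inherited directly from Proposition~\ref{prop:strategyumtocf}, so no separate case analysis is required.
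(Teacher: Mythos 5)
Your proposal is correct and matches the paper's proof essentially verbatim: the paper also just chains Proposition~\ref{prop:vpcf} ($\chicf(G_m) \geq \vp(G_m)$), Proposition~\ref{prop:strategyumtocf} ($\vp(G_m) \geq \vcs(G_{\floor{m/2}})$), and Proposition~\ref{prop:cscfplayer1} ($\vcs(G_{\floor{m/2}}) \geq \chium(G_{\floor{m/2}})$). The only cosmetic difference is that you invoke the corollary $\vcs(G) = \chium(G)$ where the paper uses only the one-sided maximizer bound of Proposition~\ref{prop:cscfplayer1}, which is all that is needed.
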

\begin{proof}
By proposition~\ref{prop:vpcf}, $\chicf(G_m) \geq \vp(G_m)$,
by proposition~\ref{prop:strategyumtocf}, $\vp(G_m) \geq
\vcs(G_{\floor{m/2}})$, and by proposition~\ref{prop:cscfplayer1},
$\vcs(G_{\floor{m/2}}) \geq \chium(G_{\floor{m/2}})$.
\end{proof}

\section{Lower bounds on the chromatic numbers of the square grid}%
\label{sec:gridlbum}

Recall that $G_m$ is the $m\times m$ grid graph, that is, the cartesian
product of two paths, each of length $m-1$. 
It was shown in \cite{BCLMZ2009journal} that $\chium(G_m)\ge 3m/2$.
The best known upper bound is $\chium(G_m)\le 2.519m$, from 
\cite{BCLMZ2009sirocconopages,BCLMZ2009journal}.
The main result of this section is the following improvement of the
lower bound.

\begin{theorem}\label{theorem:5mover3lowerbound}
For $m \geq 2$, $\chium(G_m) \geq \tfrac{5}{3}m - \logfhol{m}$.
\end{theorem}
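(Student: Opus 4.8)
The goal is to prove a lower bound $\chium(G_m) \geq \tfrac{5}{3}m - \logfhol{m}$. I have a previously known bound $\chium(G_m) \geq 3m/2$ from \cite{BCLMZ2009journal}, which I want to improve from $3/2$ to $5/3$ in the leading coefficient (at the cost of a logarithmic error term). The key tool I have available is Theorem~\ref{theorem:halfgrid}, which relates the two chromatic numbers across scales: $\chicf(G_m) \geq \chium(G_{\floor{m/2}})$. My instinct is that a lower bound on $\chium(G_m)$ should be built recursively by separating the grid and accounting for how colors must be ``used up'' on separators. The $\logfhol{m} = 18\log_2 m$ error term strongly suggests a recursion of depth $O(\log m)$, where at each level I lose a constant additive amount.

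Let me think about what structure to exploit. The natural approach for lower bounds on $\chium$ of grids is a separator argument: any unique-maximum coloring of $G_m$ must place its top colors on a separating set, and the two (or more) pieces left behind are themselves large grid-like regions whose unique-maximum number I can bound inductively. The improvement from $3/2$ to $5/3$ presumably comes from a more careful accounting of how many distinct high colors a separator of the grid forces — rather than just counting a single row or column (which gives roughly $m$ plus a recursion yielding $3m/2$), one should argue that the separator must be ``thick'' or must interact with multiple nested subgrids in a way that forces extra colors. Concretely, I would set up a recurrence of the form $f(m) \geq \alpha m + f(\beta m)$ and solve it: if each level contributes a linear term and the grid shrinks by a constant factor, the geometric series sums to $\tfrac{5}{3}m$ while the accumulated constant losses over the $O(\log m)$ levels produce the $-18\log_2 m$ term. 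The constant $18$ and base-$2$ logarithm are the fingerprints of halving the grid dimension roughly $\log_2 m$ times and paying a bounded constant per halving.

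The main obstacle, and the place where the real work lies, is establishing the per-level linear gain of the right magnitude. A crude separator (a single row) only yields coefficient $3/2$; to reach $5/3$ I expect I must analyze the coloring near a minimal separator more cleverly — for instance, tracking not just the maximum color in each component but showing that the vertices carrying the top $\Theta(m)$ colors cannot all lie on a single short cut, so that summing the contributions of the separator together with the inductive contribution of the residual subgrid beats the naive bound. I would first make precise which subgrid the induction applies to (likely a subgrid of side length close to $m$ minus a constant, or exactly $G_{\floor{m/2}}$ via Theorem~\ref{theorem:halfgrid} if the argument routes through $\chicf$), verify the base case $m=2$ directly, and then discharge the recursion. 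The delicate bookkeeping is ensuring the additive constant lost at each of the $\approx\log_2 m$ recursion steps is uniformly bounded by a small constant so that the total error stays within $18\log_2 m$; I expect this constant-chasing, rather than any single clever geometric insight, to be the technically demanding part of the proof.
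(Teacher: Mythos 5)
Your proposal has the right outer shell --- induction on $m$, a minimal separator formed by uniquely-colored vertices, and a recurrence of the form $f(m)\geq \alpha m+f(\beta m)$ whose geometric series produces the coefficient $\tfrac{5}{3}$ --- but it defers precisely the content that constitutes the proof, and it mislabels where the difficulty lies. Two ideas are missing. First, a structural classification of minimal separators of the grid: the paper draws $G_m$ together with an auxiliary planar graph $H'$ (grid edges plus diagonals plus an apex vertex $v$ joined to the boundary) and shows that a minimal separator is either a \emph{cycle-separator} or a \emph{path-separator} whose endpoints lie on the boundary. In every case but one, a suitable sequence of edge contractions (definition~\ref{defn:minorinducedcoloring}, proposition~\ref{prop:minorchio}) turns the given coloring into an induced unique-maximum coloring of $G_{m-1}$ with two fewer colors, i.e.\ $\chium(G_m)\geq\chium(G_{m-1})+2$, which beats the target since the bound only requires a gain of $\tfrac{5}{3}$ per decrement of $m$. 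Second --- and this is the heart of the theorem, fixing the values $\alpha=1$, $\beta=\tfrac{2}{5}$ that you leave unspecified --- the single remaining case is a path-separator $S$ with exactly one vertex per column, a monotone staircase with $\card{S}=m$, all of whose vertices carry distinct unique colors. A pigeonhole argument then locates a subgrid isomorphic to $G_{2k}$, where $5k\le m\le 5k+4$, disjoint from $S$: either the $3k\times 2k$ block $[k,4k-1]\times[0,2k-1]$ misses $S$, or $S$ enters it at some column $i$, and since the staircase changes height by at most one per column it cannot reach height $3k$ within $k$ columns of $i$, so a $2k\times 2k$ block above is free. This gives $\chium(G_m)\geq m+\chium(G_{2k})$, and $m+\tfrac{10}{3}k\geq\tfrac{5}{3}m$ up to the logarithmic slack. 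Without this case analysis there is no proof; contrary to your closing remark, the demanding part is exactly this ``single clever geometric insight,'' not constant-chasing.

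Two further corrections. Routing through theorem~\ref{theorem:halfgrid}, which you float as an option, is a dead end for this statement: that theorem bounds $\chicf(G_m)$ from below by $\chium(G_{\floor{m/2}})$, so combined with $\chium\geq\chicf$ it yields only $\chium(G_m)\geq\chium(G_{\floor{m/2}})$, which is trivial by subgraph monotonicity (proposition~\ref{prop:subgraphmonotonicity}) and gains nothing. Also, the error term does not arise as ``a bounded constant per halving'' paid at every level: the $m\to m-1$ steps have surplus $\tfrac{1}{3}$ and need no slack; the $\logfhol{m}$ term is only needed to absorb the $O(1)$ deficit (from $m\leq 5k+4$) in the staircase case, whose recursion depth is logarithmic because the grid there shrinks by the factor $\tfrac{2}{5}$.
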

\begin{proof}
For any subset $A\subset V(G)$, let $N_G(A)$ denote the {\em boundary} of $A$,
that is, all vertices which are not in $A$, but neighbors of some vertex in
$A$.
Observe that in a  unique-maximum coloring of a connected graph $G$,
the set of vertices of unique colors form a separator (see, e.g., 
\cite{orderedcoloring}).
Indeed, remove all vertices of unique colors from $G$, let $G'$ be the
remaining graph and let 
color $c$ be the highest remaining color. It is not unique, let $u$ and $v$ be
two vertices of color $c$. Then there can not be a path in $G'$ from $u$ to
$v$, therefore, $G'$ is not connected.

We will use induction on $m$. Consider a  unique-maximum coloring of $G_m$
and take a minimal separator, formed
by vertices of unique colors. 
Using the separator and the coloring, 
after applying a carefully selected sequence of minor operations
(vertex deletion, edge deletion, edge contraction)
on $G_m$,
we obtain
an induced unique-maximum coloring 
(see definition~\ref{defn:minorinducedcoloring}) 
of $G_{m'}$ for some $m'<m$, 
and we apply the induction hypothesis to prove the lower bound. 

%#5

Throughout the proof, we consider $G=G_m$ in its {\em standard drawing}, that
is,
the vertices are points $(x, y)$, $0\le x, y\le m-1$, 
two vertices $(x, y)$ and $(x', y')$ are 
connected if and only if $|x-x'|+|y-y'|=1$, and edges are drawn as straight
line segments. If it is clear from the context, we do not make any notational
distinction
between vertices (edges) and points (resp. segments) representing them.
Denote by $V$ the vertices of the grid, that is, $V=V(G)$. 
Take an additional vertex $v$, ``outside'' $G_m$, say, at $(-2, -2)$,
and connect it with all boundary vertices of $G_m$, so that we do not create
any edge crossing.
Let $G'=G'_m$ denote the resulting graph, Let $V'=V(G')$.

Define graph $H'$ and its drawing as follows. 
The vertex set of $H'$ is $V'$. Vertex $v$ is connected 
to the boundary vertices of the grid, just like in $G'$.
Two vertices, $(x, y)$ and $(x', y')$ in the grid are connected by a straight
line segment in $H'$ 
if and only if 
$|x-x'|\le 1$ and $|y-y'|\le 1$. 

Suppose that $S\subset V'$, and $H'[S]$ contains a non-self-intersecting 
cycle $C$. 
Let $A$ (resp. $B$) 
be those vertices in $V'$ which are {\em inside} (resp. {\em outside})
$C$. If $A, B\ne \emptyset$, then $C$ is called a {\em separating} cycle.
If $A=\emptyset$, then $C$ is called an {\em empty} cycle.
Suppose that $C$ is a separating cycle.
Since edges of $H'$ and edges of $G'$ do not intersect each other, 
$S$ separates $A$ and $B$ in $G'$. 

Suppose now that $S$ is a separator in $G'$
and let 
$A$ be the vertex set of one of the connected components,
separated by $S$.
Clearly, the boundary of $A$,
$N_{G'}(A)$ belongs to $S$, and an 
easy case analysis shows that
the edges of 
$H'[N_{G'}(A)]$, in the present drawing, 
separate the vertices of $A$ from the other vertices.
Suppose from now that $S$ is a minimal separator.
Then, by the previous observations, $H'[S]$ contains one or more separating
cycles.
Let $C$ be a separating cycle in $H'[S]$ with the smallest number of points inside,
and let $A$ be the set of these points.
Then $N_{G'}(A)\subset C$, but since  $N_{G'}(A)$ already separates $A$
from the other vertices, $N_{G'}(A)=S$.
Observe that the only 
empty cycle in $H'$ is the right angled triangle with leg 1.
If  $H'[S]$ contains such a cycle, then one of its vertices 
can be removed from $S$ and we still have a separator. 
Therefore, there are no empty cycles in $H'[S]$.
Moreover, by the minimality of $S$, every separating cycle
in $H'[S]$ contains exactly the points of $A$ in its interior.
It follows, that $H'[S]$ is a cycle that has $A$ in its interior, 
and the remaining points, $V'\setminus (S\cup A)$ in the exterior.

It is easy to see that if $S$ is a separator in $G_m$, then 
$S\cup\{v\}$ is a separator in $G'_m$.
On the other hand, if $S$ is a separator is $G'_m$, then 
$S\setminus\{v\}$ is a separator in $G'_m$. 
Consequently, if $S$ is a minimal separator in $G_m$,
then either $S$ is a minimal separator in $G'_m$, 
or $S\cup\{v\}$ is a minimal separator in $G'_m$.
In the first case we say that $S$ is a {\em cycle-separator}
(see figure~\ref{fig:cyclesep}), 
in the second case we say that it is a {\em path-separator}
(see figure~\ref{fig:pathsep})
of $G_m$.
The vertices of a 
cycle-separator form a cycle in $H'$, and the vertices of a
path-separator form a path, whose first and last vertices 
are the only 
neighbors of $v$, that is, they are on the boundary of the grid,
and the other vertices of $S$ are not on the boundary.

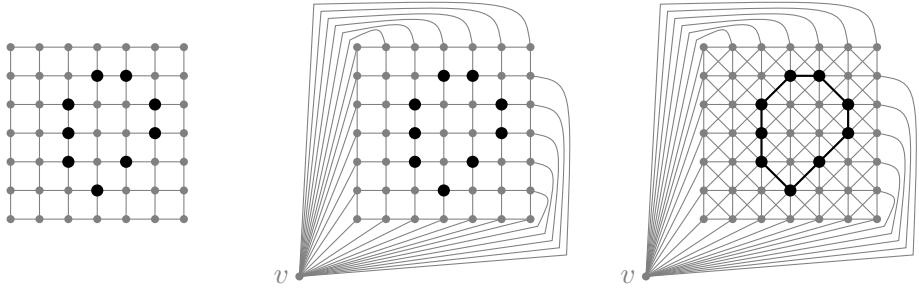
\begin{figure}[tbp]
\centering
\begin{tikzpicture}[scale=0.38]
\newcommand{\normalvsize}{4pt}
\newcommand{\sepvsize}{6pt}
\newcommand{\mx}{6}%
\newcommand{\curvedlines}{%
\foreach \x in {0, ..., \mx} {
    \draw (\x,0) -- (specialv);
}
\foreach \y in {1, ..., \mx} {
    \draw (0,\y) -- (specialv);
}
\foreach \x in {1, ..., \mx} {
    \draw (\x,\mx) .. controls (\x-0.2, 6.58+0.18*\x) .. (-0.25*\x, \mx+0.25*\x)
	 -- 	
	(specialv);
}
\foreach \y in {1, ..., 5} {
    \draw (\mx,\y) .. controls (6.58+0.18*\y, \y-0.2) .. (\mx+0.25*\y,-0.25*\y)
	 -- 	
	(specialv);
}%
}
\newcommand{\diagonallines}{%
    \foreach \x in {1, ..., \mx} {
        \draw (\x,0) -- (0,\x); 
    }%
    \foreach \x in {1, ..., \mx} {
    	\draw (\x,\mx) -- (0,\mx-\x);
    }
    \foreach \x in {1, ..., \mx} {
        \draw (\x,\mx) -- (\mx,\x); 
    }%
    \foreach \x in {1, ..., \mx} {
     	\draw (\x,0) -- (\mx,\mx-\x);
    }
}%
\begin{scope}[xshift=0cm] % 
\begin{scope}[gray]
%\coordinate (specialv) at (-2,-2);
%\fill (specialv) circle (\normalvsize);
%\curvedlines
\foreach \x in {0, ..., \mx} {
    \foreach \y in {0, ..., \mx} {
            \fill (\x,\y) circle (\normalvsize);
    }
    % line drawing
    \draw (\x,0) -- (\x,\mx); % column x
    \draw (0,\x) -- (\mx,\x); % row x
}
\end{scope}
% separator of type i
\fill (3,1) circle (\sepvsize);
\fill (4,2) circle (\sepvsize);
\fill (5,3) circle (\sepvsize);
\fill (5,4) circle (\sepvsize);
\fill (4,5) circle (\sepvsize);
\fill (3,5) circle (\sepvsize);
\fill (2,4) circle (\sepvsize);
\fill (2,3) circle (\sepvsize);
\fill (2,2) circle (\sepvsize);
\end{scope}
\begin{scope}[xshift=12cm] % 
\begin{scope}[gray]
\coordinate (specialv) at (-2,-2);
\fill (specialv) circle (\normalvsize);
\node [left] at (specialv) {$v$};
\curvedlines
\foreach \x in {0, ..., \mx} {
    \foreach \y in {0, ..., \mx} {
            \fill (\x,\y) circle (\normalvsize);
    }
    % line drawing
    \draw (\x,0) -- (\x,\mx); % column x
    \draw (0,\x) -- (\mx,\x); % row x
}
% diagonal lines
%\diagonallines
\end{scope}
% separator of type i
\fill (3,1) circle (\sepvsize);
\fill (4,2) circle (\sepvsize);
\fill (5,3) circle (\sepvsize);
\fill (5,4) circle (\sepvsize);
\fill (4,5) circle (\sepvsize);
\fill (3,5) circle (\sepvsize);
\fill (2,4) circle (\sepvsize);
\fill (2,3) circle (\sepvsize);
\fill (2,2) circle (\sepvsize);
\end{scope}
\begin{scope}[xshift=24cm] % separator of type i (cycle)
\begin{scope}[gray]
\coordinate (specialv) at (-2,-2);
\fill (specialv) circle (\normalvsize);
\node [left] at (specialv) {$v$};
\curvedlines
\foreach \x in {0, ..., \mx} {
    \foreach \y in {0, ..., \mx} {
            \fill (\x,\y) circle (\normalvsize);
    }
    % line drawing
    \draw (\x,0) -- (\x,\mx); % column x
    \draw (0,\x) -- (\mx,\x); % row x
}
% diagonal lines
\diagonallines
\end{scope}
% separator of type i
\fill (3,1) circle (\sepvsize);
\fill (4,2) circle (\sepvsize);
\fill (5,3) circle (\sepvsize);
\fill (5,4) circle (\sepvsize);
\fill (4,5) circle (\sepvsize);
\fill (3,5) circle (\sepvsize);
\fill (2,4) circle (\sepvsize);
\fill (2,3) circle (\sepvsize);
\fill (2,2) circle (\sepvsize);
% simple closed curve induced by separator of type i
\begin{scope}[thick]
\draw (3,1) -- (4,2) -- (5,3) -- (5,4) -- (4,5) --
      (3,5) -- (2,4) -- (2,3) -- (2,2) -- cycle ;
\end{scope}
\end{scope}
\end{tikzpicture}
\caption{A cycle-separator in $G$, $G'$, and $H'$, for $m=7$}
\label{fig:cyclesep}
\end{figure}

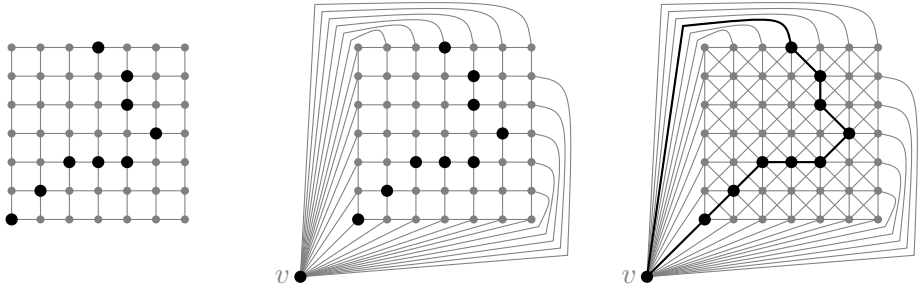
\begin{figure}[tbp]
\centering
\begin{tikzpicture}[scale=0.38]
\newcommand{\normalvsize}{4pt}
\newcommand{\sepvsize}{6pt}
\newcommand{\mx}{6}%
\newcommand{\curvedlines}{%
\foreach \x in {0, ..., \mx} {
    \draw (\x,0) -- (specialv);
}
\foreach \y in {1, ..., \mx} {
    \draw (0,\y) -- (specialv);
}
\foreach \x in {1, ..., \mx} {
    \draw (\x,\mx) .. controls (\x-0.2, 6.58+0.18*\x) .. (-0.25*\x, \mx+0.25*\x)
	 -- 	
	(specialv);
}
\foreach \y in {1, ..., 5} {
    \draw (\mx,\y) .. controls (6.58+0.18*\y, \y-0.2) .. (\mx+0.25*\y,-0.25*\y)
	 -- 	
	(specialv);
}%
}
\newcommand{\diagonallines}{%
    \foreach \x in {1, ..., \mx} {
        \draw (\x,0) -- (0,\x); 
    }%
    \foreach \x in {1, ..., \mx} {
    	\draw (\x,\mx) -- (0,\mx-\x);
    }
    \foreach \x in {1, ..., \mx} {
        \draw (\x,\mx) -- (\mx,\x); 
    }%
    \foreach \x in {1, ..., \mx} {
     	\draw (\x,0) -- (\mx,\mx-\x);
    }
}%
\begin{scope}[xshift=0cm] % 
\begin{scope}[gray]
%\coordinate (specialv) at (-2,-2);
%\fill (specialv) circle (\normalvsize);
%\curvedlines
\foreach \x in {0, ..., \mx} {
    \foreach \y in {0, ..., \mx} {
            \fill (\x,\y) circle (\normalvsize);
    }
    % line drawing
    \draw (\x,0) -- (\x,\mx); % column x
    \draw (0,\x) -- (\mx,\x); % row x
}
\end{scope}
% separator of type ii
\fill (0,0) circle (\sepvsize);
\fill (1,1) circle (\sepvsize);
\fill (2,2) circle (\sepvsize);
\fill (3,2) circle (\sepvsize);
\fill (4,2) circle (\sepvsize);
\fill (5,3) circle (\sepvsize);
\fill (4,4) circle (\sepvsize);
\fill (4,5) circle (\sepvsize);
\fill (3,6) circle (\sepvsize);
\end{scope}
\begin{scope}[xshift=12cm] % 
\begin{scope}[gray]
\coordinate (specialv) at (-2,-2);
\fill (specialv) circle (\normalvsize);
\node [left] at (specialv) {$v$};
\curvedlines
\foreach \x in {0, ..., \mx} {
    \foreach \y in {0, ..., \mx} {
            \fill (\x,\y) circle (\normalvsize);
    }
    % line drawing
    \draw (\x,0) -- (\x,\mx); % column x
    \draw (0,\x) -- (\mx,\x); % row x
}
% diagonal lines
%\diagonallines
\end{scope}
% separator of type ii
\fill (0,0) circle (\sepvsize);
\fill (1,1) circle (\sepvsize);
\fill (2,2) circle (\sepvsize);
\fill (3,2) circle (\sepvsize);
\fill (4,2) circle (\sepvsize);
\fill (5,3) circle (\sepvsize);
\fill (4,4) circle (\sepvsize);
\fill (4,5) circle (\sepvsize);
\fill (3,6) circle (\sepvsize);
% plus vertex v:
\fill (-2,-2) circle (\sepvsize);
\end{scope}
\begin{scope}[xshift=24cm] % separator of type ii (path)
\begin{scope}[gray]
\coordinate (specialv) at (-2,-2);
\fill (specialv) circle (\normalvsize);
\node [left] at (specialv) {$v$};
\curvedlines
\foreach \x in {0, ..., \mx} {
    \foreach \y in {0, ..., \mx} {
            \fill (\x,\y) circle (\normalvsize);
    }
    % line drawing
    \draw (\x,0) -- (\x,\mx); % column x
    \draw (0,\x) -- (\mx,\x); % row x
}
% diagonal lines
\diagonallines
\end{scope}
\fill (specialv) circle (\sepvsize);
% separator of type ii
\fill (0,0) circle (\sepvsize);
\fill (1,1) circle (\sepvsize);
\fill (2,2) circle (\sepvsize);
\fill (3,2) circle (\sepvsize);
\fill (4,2) circle (\sepvsize);
\fill (5,3) circle (\sepvsize);
\fill (4,4) circle (\sepvsize);
\fill (4,5) circle (\sepvsize);
\fill (3,6) circle (\sepvsize);
\begin{scope}[thick]
% simple closed curve induced by separator of type ii
\draw (0,0) -- (1,1) -- (2,2) -- (3,2) -- (4,2) --
      (5,3) -- (4,4) -- (4,5) -- (3,6)  ;
% edges with special vertex
\draw (specialv) -- (0,0);
\draw (3,\mx) .. controls (3-0.2, 6.58+0.18*3) .. (-0.25*3, \mx+0.25*3)
	 -- 	
	(specialv);
\end{scope}
\end{scope}
\end{tikzpicture}
\caption{A path-separator in $G$, $G'$, and $H'$, for $m=7$}
\label{fig:pathsep}
\end{figure}

Our bound is negative for $m\le 64$, so assume that $m>64$, 
and the statement has been proved for 
smaller values of $m$.
Consider an optimal coloring of $G_m$, and let 
$S$ be a minimal separator, all of whose vertices have unique colors.

\smallskip

\noindent {\bf Case 1:} {\em $S$ is a cycle-separator.}
\noindent Let $z$ be the smallest value of $x+y$ over all vertices 
of $S$,  and let $(x, y)$ be the vertex of $S$ for which  $x+y=z$, and
$y$ is the largest. Then vertex $(x+1, y-1)$ is also a vertex of $S$,
and one of $(x,y+1)$, $(x+1,y+1)$ is also in $S$.
Let $(x', y')$ be the vertex of $S$ for which  $x+y=z$, and
$y$ is the smallest. Then $y'< y$, since $(x+1, y-1)$ is in $S$.
Moreover, vertex $(x'-1, y'+1)$ is also a vertex of $S$,
and one of $(x'+1,y')$, $(x'+1,y'+1)$ is also in $S$.
Consider the following contractions of horizontal edges: 
$(x, m-1)(x+1, m-1)$, $(x, m-2)(x+1, m-2)$, $\ldots$, $(x, y)(x+1, y)$, 
$(x+1, y-1)(x+2, y-1)$, $(x+2, y-2)(x+3, y-2)$, $\ldots$, 
$(x', y')(x'+1, y')$, $(x', y'-1)(x'+1, y'-1)$,  $\ldots$, 
$(x', 0)(x'+1, 0)$, and vertical edges:
$(0, y)(0, y+1)$, $(1, y)(1, y+1)$,  $\ldots$, 
$(x, y)(x, y+1)$, $(x+1, y)(x+1, y+1)$,  $(x+2, y-1)(x+2, y)$,
$\ldots$, $(x'+1, y')(x'+1, y'+1)$, $(x'+2, y')(x'+2, y'+1)$, 
$\ldots$,  $(m-1, y')(m-1, y'+1)$.
We obtain a graph, which contains $G_{m-1}$ as a subgraph and the 
induced coloring uses at least two less colors that the coloring of $G_m$
See figure~\ref{fig:branchsets}, where for each gray area, 
vertices are contracted to a single vertex.
The induced coloring uses at least $\chium(G_{m-1})$ colors, therefore, we have 
$\chium(G_m) \geq \chium(G_{m-1}) + 2 \geq 
\frac{5}{3}(m-1) - \logfhol{(m-1)} + 2 > \tfrac{5}{3}m - \logfhol{m}$.

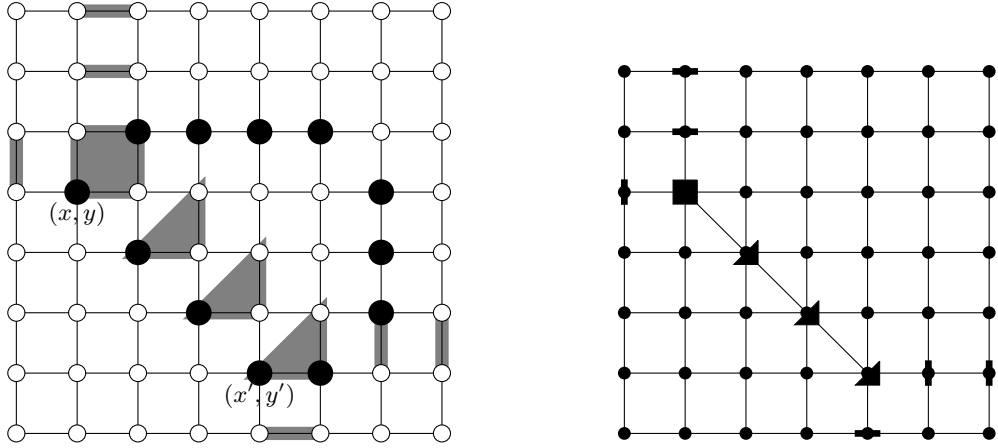
\begin{figure}[hbtp]
\centering
\begin{tikzpicture}[scale=0.8]
\newcommand{\normalvsize}{4pt}
\newcommand{\sepvsize}{6pt}
\newcommand{\hugesize}{8pt}
\newcommand{\fatness}{5pt}
\begin{scope} % separator of type i
\newcommand{\mx}{7}%
\begin{scope}%[gray]
% branch sets 
\begin{scope}[line width=\fatness, gray]
\draw[fill] (4,0) -- (5,0) -- cycle ;
\draw[fill] (6,1) -- (6,2) -- cycle ;
\draw[fill] (7,1) -- (7,2) -- cycle ;
\draw[fill] (0,4) -- (0,5) -- cycle ;
\draw[fill] (1,6) -- (2,6) -- cycle ;
\draw[fill] (1,7) -- (2,7) -- cycle ;
\draw[fill] (4,1) -- (5,1) -- (5,2) -- cycle ;
\draw[fill] (3,2) -- (4,2) -- (4,3) -- cycle ;
\draw[fill] (2,3) -- (3,3) -- (3,4) -- cycle ;
\draw[fill] (1,4) -- (2,4) -- (2,5) -- (1,5) -- cycle ;
\end{scope}
\foreach \x in {0, ..., \mx} {
    % line drawing
    \draw (\x,0) -- (\x,\mx); % column x
    \draw (0,\x) -- (\mx,\x); % row x
}
\foreach \x in {0, ..., \mx} {
    \foreach \y in {0, ..., \mx} {
            \fill[white] (\x,\y) circle (\normalvsize);
    }
    \foreach \y in {0, ..., \mx} {
            \draw (\x,\y) circle (\normalvsize);
    }
}
\end{scope}
% separator of type i
\fill (4,1) circle (\sepvsize);
\fill (5,1) circle (\sepvsize);
\fill (6,2) circle (\sepvsize);
\fill (6,3) circle (\sepvsize);
\fill (6,4) circle (\sepvsize);
\fill (5,5) circle (\sepvsize);
\fill (4,5) circle (\sepvsize);
\fill (3,5) circle (\sepvsize);
\fill (2,5) circle (\sepvsize);
\fill (1,4) circle (\sepvsize);
\fill (2,3) circle (\sepvsize);
\fill (3,2) circle (\sepvsize);
% simple closed curve induced by separator of type i
%\draw (3,1) -- (4,2) -- (5,3) -- (5,4) -- (4,5) --
%      (3,5) -- (2,4) -- (2,3) -- (2,2) -- cycle ;
%
% 
% labels at x, x', y, y'
%\node[above] at (1,7) {column $x$};
\node [below] at (1,4) {\footnotesize $(x,y)$};
\node [below] at (4,1) {\footnotesize $(x',y')$};
\end{scope}
\begin{scope}[xshift=10cm] 
\newcommand{\mx}{6}%
\newcommand{\smallvsize}{3pt}%
\begin{scope}%[gray]
\foreach \x in {0, ..., \mx} {
    \foreach \y in {0, ..., \mx} {
            \fill (\x,\y) circle (\smallvsize);
    }
    % line drawing
    \draw (\x,0) -- (\x,\mx); % column x
    \draw (0,\x) -- (\mx,\x); % row x
}
\end{scope}
% separator of type ii
% simple closed curve induced by separator of type ii
\draw (4,1) -- (3,2) -- (2,3) -- (1,4) ;
% contracted vertices
\begin{scope}%[line width=2pt]
\draw[fill] (1,4)  +(45:\hugesize) -- +(135:\hugesize) --
  +(225:\hugesize) -- +(315:\hugesize) -- +(45:\hugesize) ;
\draw[fill] (2,3)  +(45:\hugesize) -- 
  +(225:\hugesize) -- +(315:\hugesize) -- +(45:\hugesize) ;
\draw[fill] (3,2)  +(45:\hugesize) -- 
  +(225:\hugesize) -- +(315:\hugesize) -- +(45:\hugesize) ;
\draw[fill] (4,1)  +(45:\hugesize) -- 
  +(225:\hugesize) -- +(315:\hugesize) -- +(45:\hugesize) ;
\begin{scope}[line width=2.5pt]
\newcommand{\halfwidth}{6pt}%
\draw (1,5) +(-\halfwidth,0) -- +(+\halfwidth,0);
\draw (1,6) +(-\halfwidth,0) -- +(+\halfwidth,0);
\draw (4,0) +(-\halfwidth,0) -- +(+\halfwidth,0);
\draw (0,4) +(0,-\halfwidth) -- +(0,+\halfwidth);
\draw (5,1) +(0,-\halfwidth) -- +(0,+\halfwidth);
\draw (6,1) +(0,-\halfwidth) -- +(0,+\halfwidth);
\end{scope}
\end{scope}
\end{scope}
\end{tikzpicture}
\caption{Graph $G_m$ with edge contractions and its minor 
    containing $G_{m-1}$}
\label{fig:branchsets}
\end{figure}

%#8

\smallskip

\noindent {\bf Case 2:} {\em $S$ is a path-separator.}
By symmetry we can assume that the path starts in column $x=0$.
If it ends in $x=0$, $y=0$, or in $y=m-1$, then, 
we can remove column $x=0$, and either row $y=0$ or $y=m-1$,
and get a unique maximum coloring of $G_{m-1}$ with at least two less colors.
Then we apply induction as in case 1. 
So we can assume that $S$ ends in $x=m-1$.
It follows that $|S|\ge m$. We distinguish two subcases.

\smallskip

\noindent {\bf Subcase 2.1.} {\em $S$ starts in $x=0$, ends in $x=m-1$, and
$|S|>m$.}

\noindent 
Orient the path formed by the vertices of $S$. For simplicity, call the 
oriented path $v_1, \ldots v_{|S|}$ 
also $S$. The edges of $S$ can be of eight types,
left, right, upper, lower, 
left-upper, left-lower, right-upper, right-lower. 
 
Suppose first that $S$ contains two edges, one of them 
is vertical (left or right edge), 
one of them is horizontal (upper or lower edge),
say, $(x, y)(x+1, y)$ and $(x', y')(x', y'+1)$.
Then contract all edges 
$(x, i)(x+1, i)$, and all edges
$(i, y')(i, y'+1)$, $0\le i\le m-1$, 
to obtain $G_{m-1}$, whose
induced coloring uses at most $\chium(G_{m})-2$ colors. Therefore, we have 
$\chium(G_m) \geq \chium(G_{m-1}) + 2 \geq 
\frac{5}{3}(m-1) - \logfhol{(m-1)} + 2 > 5m/3 - \logfhol{m}$.
So, we can assume in the sequel that either there are no vertical edges, or no
horizontal edges in $S$. Suppose that there are no {\em horizontal} edges, 
and let $v_i=(x, y)$ be a vertex of $S$ where $y$ is the largest. 
Then $v_{i-1}v_i$ is an upper-right edge, and $v_iv_{i+1}$ is a lower-right
edge, or  
$v_{i-1}v_i$ is an upper-left edge, and $v_iv_{i+1}$ is a lower-left
edge. 
We can assume the first one, otherwise we can take the opposite
orientation of $S$. Let $v_i, \ldots , v_j$ be a maximal interval of $S$ 
where all edges are lower-right. 
By assumption,
edge $v_jv_{j+1}$ can not be horizontal, by the minimality of
$S$ 
it can not be upper, 
if it is 
lower, or lower-left, then we can proceed just like in the case 
of cycle-separators, by a sequence of edge contractions we can obtain 
an induced coloring of $G_m$ with two less colors and we are done by
induction.
So, $v_jv_{j+1}$ can only be an upper-right edge. 
We can apply the same argument for the next maximal interval
$v_j, \ldots , v_k$ and obtain that $v_kv_{k+1}$ is a lower-right edge.
We can argue similarly ``backwards'' on $S$, if $v_l, \ldots , v_i$ is a
maximal interval of upper-right edges, then $v_{l-1}v_l$ is a
lower-right edge. It follows, that all edges of $S$ are 
either upper-right, or lower-right. But then $S$ can not have more then $m$
vertices, a contradiction.
In the case when there are no {\em vertical} edges, the argument is 
almost exactly the same.

\smallskip

\noindent {\bf Subcase 2.2.} {\em $S$ starts in $x=0$, ends in $x=m-1$, and
$|S|=m$.}

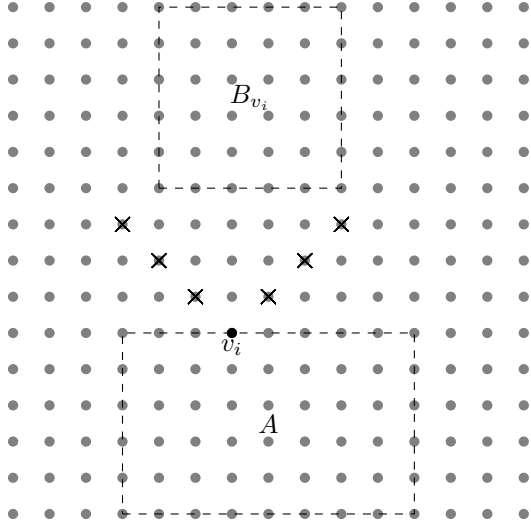
\begin{figure}
\centering
\begin{tikzpicture}[scale=0.48]
\newcommand{\normalvsize}{4pt}
\newcommand{\sepvsize}{6pt}
\newcommand{\mx}{14}%
\pgfsetplotmarksize{8pt}%
\begin{scope} % m = 5k
\begin{scope}[gray]
\foreach \x in {0, ..., \mx} {
    \foreach \y in {0, ..., \mx} {
            \fill (\x,\y) circle (\normalvsize);
    }
}
\end{scope}
% vertex set A
\draw [dashed] (3,0) -- (11,0) -- (11,5) -- (3,5) -- cycle;
% vertex v_i
\fill (6,5) circle (\normalvsize);
% six more vertices in separator
\draw plot[mark=x] (3,8);
\draw plot[mark=x] (4,7);
\draw plot[mark=x] (5,6);
\draw plot[mark=x] (7,6);
\draw plot[mark=x] (8,7);
\draw plot[mark=x] (9,8);
% vertex set B_{v_i}
\draw [dashed] (4,9) -- (9,9) -- (9,14) -- (4,14) -- cycle;
%
% legends
\node at (7.0,2.5) {{\small $A$}};
\node at (6.5,11.5) {{\small $B_{v_i}$}};
\node [below, outer sep = -1pt] at (6,5) {{\small $v_i$}};
%
%\node at (7.0,-1.4) {(a)};
\end{scope}
\end{tikzpicture}
\caption{The subcase $\card{S} = m$}
\label{fig:msep}
\end{figure}

\noindent 
Since If $|S| = m$, $S=v_1, v_2, \dots, v_m$ 
such that $v_i = (i-1,y_i)$, for every $i$.
We show that $G_m\setminus S$ contains a subgraph isomorphic to
$G_{2k}$.

Suppose that $5k\le m\le 5k+4$.
Consider the set of vertices
\[A = \{(x,y) \mid  
   \text{$k \leq x \leq 4k-1$, $0 \leq y \leq 2k-1$}\}.\]
Set $A$ induces a $3k \times 2k$ grid graph, $G_{3k,2k}$, in $G_m$.
If $A \cap S = \emptyset$, then $G_m-S \containssubgraph G_{3k,2k}
\containssubgraph G_{2k}$; otherwise some $v_i \in S$ belongs to
$A$, i.e., $v_i = (i,y_i)$
with $k \leq i \leq 4k-1$ and $0 \leq y_i \leq 2k-1$. 
Then, consider the set of vertices
\[B_{v_i} = \{(x,y) \mid 
   \text{$i-k+1 \leq x \leq i+k$, $3k \leq y \leq m-1$}\},\]
which contains a $G_{2k}$ subgraph in $G_m$ and it is disjoint from $S$.
Therefore, $G_m - S$ contains a subgraph isomorphic to
$G_{2k}$, and thus
$\chium(G_m) \geq 
 m + \chium(G_{2k})\ge m + \tfrac{10}{3}k -  \logfhol{2k}
\ge \tfrac{5}{3}m - \logfhol{m}$.
\end{proof}

\begin{remark}
By a slightly more careful calculation we could get
$\chium(G_m) \geq \tfrac{5}{3}m-\log_{5/2}m$.
\end{remark}

An immediate corollary from theorem~\ref{theorem:halfgrid} 
is the following.

\begin{corollary}
For $m \geq 2$, 
$\chicf(G_m) \geq \tfrac{5}{6}m - 10\log_2{m}$.
\end{corollary}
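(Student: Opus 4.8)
The plan is to chain the two main results of this part of the paper, with no new ideas required. By Theorem~\ref{theorem:halfgrid}, $\chicf(G_m) \ge \chium(G_{\floor{m/2}})$ for every $m>1$, so it suffices to bound $\chium(G_{\floor{m/2}})$ from below. I would then apply the unique-maximum lower bound to the half-sized grid: with $n=\floor{m/2}$ (which satisfies $n\ge 2$ as soon as $m\ge 4$), Theorem~\ref{theorem:5mover3lowerbound} gives $\chium(G_{\floor{m/2}}) \ge \tfrac53\floor{m/2} - 18\log_2\floor{m/2}$.

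From here the work is purely arithmetic simplification. I would use $\floor{m/2}\ge (m-1)/2$ to turn the leading term into $\tfrac53\cdot\tfrac{m-1}{2}=\tfrac56 m-\tfrac56$, which already exhibits the claimed coefficient $\tfrac56$ on $m$ at the cost of a single additive constant. For the correction term, monotonicity of $\log_2$ together with $\floor{m/2}\le m/2$ gives $\log_2\floor{m/2}\le \log_2 m-1$, so the subtracted logarithm contributes an extra additive term that comfortably absorbs the $-\tfrac56$ lost above.

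The only subtle point is the logarithmic \emph{coefficient}: halving $m$ does not shrink the coefficient in front of $\log_2$, so feeding $\floor{m/2}$ into the bound of Theorem~\ref{theorem:5mover3lowerbound} alone would leave a $-18\log_2 m$ term rather than the sharper $-10\log_2 m$ claimed. To reach the stated constant I would instead use the improved estimate of the Remark, $\chium(G_m)\ge \tfrac53 m-\log_{5/2}m$, whose coefficient $1/\log_2(5/2)\approx 0.76$ lies far below $10$ and thus leaves ample slack to swallow every additive constant produced by the floor. The residual bookkeeping — the finitely many small $m$ (where $G_{\floor{m/2}}=G_1$, and where in any case $\tfrac56 m-10\log_2 m$ is negative, so the statement is vacuous) — I would verify directly. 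Hence the main obstacle is not conceptual but a careful choice of the floor and logarithm estimates so that the generous $10\log_2 m$ term dominates the constants introduced by passing from $G_m$ to $G_{\floor{m/2}}$.
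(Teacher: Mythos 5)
Your proposal is correct and follows exactly the route the paper intends, since the paper offers no explicit proof beyond declaring the corollary ``immediate'' from Theorem~\ref{theorem:halfgrid} combined with the lower bound on $\chium(G_{\floor{m/2}})$. In fact your treatment is more careful than the paper's: as you rightly observe, feeding $\floor{m/2}$ into Theorem~\ref{theorem:5mover3lowerbound} with its $18\log_2$ error term only yields $\chicf(G_m) \geq \tfrac{5}{6}m - 18\log_2 m + O(1)$, which for $m\geq 5$ is strictly weaker than the claimed $\tfrac{5}{6}m - 10\log_2 m$, so the constant $10$ genuinely requires the sharper $\log_{5/2}$ bound of the Remark (or an equivalent recalculation), and your disposal of the small cases $m\in\{2,3\}$ (where $\floor{m/2}=1$ but the right-hand side is negative) is the right way to finish.
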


\section{Discussion and open problems}\label{sec:conclusion}

As we mentioned in the introduction, conflict-free and
unique-maximum colorings can be defined for hypergraphs. In the
literature of conflict-free colorings, hypergraphs that are induced
by geometric shapes have been in the focus. It would be
interesting to show possible relations of the respective chromatic
numbers in this setting.

An interesting open problem is to determine the exact value of the
unique-maximum chromatic number for the square grid $G_m$. In this
paper, we improved the lower bound asymptotically to $5m/3$,
and we believe that this bound is still far from optimal. 
Observe that in each case, our recursion step
would allow us to prove  
a lower bound of the form $2m-o(m)$, with the exception of the last 
case, when $|S|=m$, that is the ``bottleneck'' of the proof.
We believe that using a more complicated recursion, with grids of rectangular 
shapes, could lead to 
an improvement.

Another area for improvement is the relation between the two
chromatic numbers for general graphs. We have only found graphs which
have unique-maximum chromatic number about twice the conflict-free
chromatic number, but the only bound we have proved on $\chium(G)$
is exponential in $\chicf(G)$.

Finally, the coNP-completeness of checking whether a coloring is 
conflict-free, implies that the decision problem for the
conflict-free chromatic number is in complexity class $\Pi_2^p$
(at the second level of the polynomial hierarchy). An interesting
direction for research would be to attempt a proof of $\Pi_2^p$
completeness for this last decision problem.

\bibliographystyle{plain}
\bibliography{cfcolor}

\newpage
\appendix

\end{document}